\newtheorem{assumption}{Assumption}
\newtheorem{proposition}{Proposition}
\newtheorem{theorem}{Theorem}
\newtheorem{lemma}{Lemma}
\theoremstyle{remark}
\newtheorem{remark}{Remark}
\DeclareMathOperator{\expit}{expit}
\begin{document}

\title{Positivity-free Policy Learning with Observational Data}

\author{Pan Zhao$^1$\thanks{Email: \href{mailto:pan.zhao@inria.fr}{pan.zhao@inria.fr}}, Antoine Chambaz$^2$, Julie Josse$^1$, Shu Yang$^3$ \\
    $^1$ PreMeDICaL, Inria \\
    Desbrest Institute of Epidemiology and Public Health, University of Montpellier, France \\
    $^2$ Universit\'e Paris Cit\'e, CNRS, MAP5, F-75006 Paris, France \\
    $^3$ Department of Statistics, North Carolina State University, United States}

\maketitle

\begin{abstract}

Policy learning utilizing observational data is pivotal across various domains, with the objective of learning the optimal treatment assignment policy while adhering to specific constraints such as fairness, budget, and simplicity. This study introduces a novel positivity-free (stochastic) policy learning framework designed to address the challenges posed by the impracticality of the positivity assumption in real-world scenarios. This framework leverages incremental propensity score policies to adjust propensity score values instead of assigning fixed values to treatments. We characterize these incremental propensity score policies and establish identification conditions, employing semiparametric efficiency theory to propose efficient estimators capable of achieving rapid convergence rates, even when integrated with advanced machine learning algorithms. This paper provides a thorough exploration of the theoretical guarantees associated with policy learning and validates the proposed framework's finite-sample performance through comprehensive numerical experiments, ensuring the identification of causal effects from observational data is both robust and reliable.

\end{abstract}

\noindent
{\bf Keywords:} incremental propensity score intervention, individualized treatment rule, natural stochastic policy, off-policy learning, semiparametric efficiency
\vfill
\newpage

\section{Introduction}

Over the past decade, methodologies for learning treatment assignment policies have seen substantial advancements in fields like biostatistics \citep{luedtke2016statistical,tsiatis2019dynamic}, computer science \citep{uehara2022review,yu2022estimating}, and econometrics \citep{athey2021policy,jia2023bayesian}. 
The core objective of data-driven policy learning is to learn optimal policies that map individual characteristics to treatment assignments to optimize some utility or outcome functions. This is crucial for deriving robust and trustworthy policies in high-stakes decision-making settings, requiring adherence to standard causal assumptions: consistency, unconfoundedness, and positivity \citep{van2011targeted,imbens2015causal}.

Various statistical and machine-learning methods have been developed to address policy learning tasks. Popular approaches include model-based methods such as Q-learning and A-learning \citep{murphy2003optimal,shi2018high}, and direct model-free policy search methods such as decision trees and outcome weighted learning \citep{zhang2012robust,cui2017tree}, among others \citep{bibaut2021risk,zhou2023offline}. Another prevailing line of work concerns heterogeneous treatment effects estimation \citep{wager2018estimation,kunzel2019metalearners,nie2021quasi,kallus2023robust}, where the sign of the conditional average treatment effects equivalently determines the optimal policy.

However, most methods depend heavily on the three standard causal assumptions to identify causal effects and optimal policies. Recent progress has been made to relax the consistency and unconfoundedness assumptions \citep{cortez2022staggered,kallus2018confounding}, but advancements addressing the violation of the positivity assumption are scarce. \cite{yang2018asymptotic} and \cite{branson2023causal} provide estimation and asymptotic inference results for propensity score trimming with binary and continuous treatments. \cite{lawrence2017counterfactual} consider counterfactual learning from deterministic bandit logs under lack of sufficient exploration. \cite{gui2023causal} use supervised representation learning to estimate causal effects for text data with apparent overlap violation. \cite{zhang2023semi} consider a missing-at-random mechanism without a positivity condition for generalizable and double robust inference for average treatment effects under selection bias with decaying overlap. \cite{jin2022policy} use pessimism and generalized empirical Bernstein's inequality to study offline policy learning without assuming any uniform overlap condition. \cite{khan2023off} provide partial identification results for off-policy evaluation under non-parametric Lipschitz smoothness assumptions on the conditional mean function, and thus avoid assuming either overlap or a well-specified model. \cite{liu2023average} propose the overlap weighted average treatment effect on the treated under lack of positivity. To our knowledge, our work is the first to consider learning treatment assignment policies while avoiding the positivity assumption.

This study introduces a novel positivity-free policy learning framework focusing on dynamic and stochastic policies, which are practical. We propose \emph{incremental propensity score policies} that shift propensity scores by an individualized parameter, requiring only the consistency and unconfoundedness causal assumptions. Our approach enhances the concept of incremental intervention effects, as proposed by \cite{kennedy2019nonparametric}, adapting it to individual treatment policy contexts.

We also use semiparametric theory to characterize the efficient influence function \citep{bickel1993efficient,laan2003unified}, which serves as the foundation to construct estimators with favorable properties, such as double/multiple robustness and asymptotically negligible second-order bias (also called Neyman orthogonality in double machine learning \citep{chernozhukov2018double} or orthogonal statistical learning \citep{foster2023orthogonal}). Thus, our proposed estimators can attain fast parametric $\sqrt{n}$ convergence rates, even when nuisance parameters are estimated at slower rates such as $n^{1/4}$ via flexible machine learning algorithms.

Based on the above efficient off-policy evaluation results, we propose approaches to learning the optimal policy by maximizing the value function, possibly under application-specific constraints. Several examples are provided in Section~\ref{sec:meth}, including fairness and resource limit. While it remains an open problem to provide finite sample or asymptotic regret bounds as \cite{athey2021policy} for stochastic policy learning with constraints, which is out of the scope of this article, we establish asymptotic guarantees for our proposed policy learning methods under alternative (stronger) conditions.

The rest of this article is organized as follows. Section~\ref{sec:sf} introduces the basic setup and notations and proposes the incremental propensity score policy. Our main identification and semiparametric efficiency theory results for off-policy evaluation are presented in Section~\ref{sec:ope.eff}. Section~\ref{sec:meth} formally introduces our positivy-free policy learning framework, with several examples. Asymptotic analysis of guarantees for policy evaluation and learning are given in Section~\ref{sec:asym}. Finally, we illustrate our methods via simulations and a data application in Section~\ref{sec:simu}. The article concludes in Section~\ref{sec:disc} with a discussion of some remarks and future work. All proofs and additional results are provided in the Supplementary Material.

\section{Statistical Framework}
\label{sec:sf}

We first introduce the notations and setup. Let $X$ denote the $p$-dimensional vector of covariates that belongs to a covariate space $\mathcal{X} \subset \mathbb{R}^p$, $A \in \mathcal{A} = \{0, 1\}$ denote the binary treatment, $Y \in \mathbb{R}$ denote the outcome of interest. Without loss of generality, we assume throughout that larger values of $Y$ are more desirable. Our observed data structure is $O = (X, A, Y)$. Suppose that our collected random sample $(O_{1}, \ldots, O_{n})$ of size $n$ are independent and identically distributed (i.i.d.) observations of $O \sim P$, where $P$ denote the true distribution of the observed data.

Now, we are in the position to introduce different types of policies or interventions commonly used in the literature: (i) under \emph{static} policies, the same treatments would be applied indiscriminately, while \emph{dynamic} policies depend on individual characteristics; (ii) \emph{deterministic} policies recommend one specific treatment and \emph{stochastic} policies output probabilities of prescribing each treatment level. This article focuses on dynamic and stochastic policies, which are more practical in various settings and have received substantial recent interest. Typical examples include point exposures \citep{dudik2014doubly}, longitudinal studies \citep{tian2008identifying,murphy2001marginal,van2007causal}, natural stochastic policies in reinforcement learning \citep{kallus2020efficient}, and particularly interventions that depend on the observational treatment process \citep{munoz2012population,haneuse2013estimation,young2014identification}; but none of the existing intervention effects both avoids positivity conditions entirely and is completely nonparametric.

We use the potential outcomes framework \citep{neyman1923applications,rubin1974estimating} to define causal effects. Let $Y(a)$ denote the potential outcome had the treatment $a$ been assigned. A policy $d: \mathcal{X} \to \{0, 1\}$ is deterministic if it maps individual characteristics $x$ to a treatment assignment $0$ or $1$, and the output of a stochastic policy $d: \mathcal{X} \to [0, 1]$ is the probability of assigning treatment $1$. Let $\mathcal{D}$ denote a pre-specified class of policies of interest, where each policy $d \in \mathcal{D}$ induces the value function defined by
\begin{equation*}
V(d) = E[Y(d)] = E[Y(1) d(X) + Y(0) (1 - d(X))],
\end{equation*}
where $Y(d)$ is the potential outcome under the policy $d$. In Remark~\ref{rm:standard.policy.learning}, we briefly review standard (deterministic) policy learning methods. In our framework, we focus on dynamic and stochastic policies. Our goal is to directly search for the optimal policy $d^\ast$ that maximizes the value function $V(d)$, possibly under application-specific constraints $c(d) \leq 0$. See Section~\ref{sec:meth} for detailed examples.

\subsection{Causal Assumptions}

We make the following identification assumptions.

\begin{assumption}[Consistency]\label{asmp:cons}
$Y = Y(A)$.
\end{assumption}

\begin{assumption}[Unconfoundedness]\label{asmp:unconf}
$A \perp Y(a) \mid X$ for $a = 0, 1$.
\end{assumption}

Assumption~\ref{asmp:cons} is also known as the stable unit treatment value assumption, which says there should be no multiple versions of the treatment and no interference between units. Assumption~\ref{asmp:unconf} states that there are no unmeasured confounders so that treatment assignment is as good as random conditional on the covariates $X$. In this article, we entirely avoid the positivity assumption which requires that each unit has a positive probability of receiving both treatment levels, i.e., $c < Pr(A = 1 \mid X) < 1 - c$ for some constant $c > 0$.

\begin{remark}\label{rm:standard.policy.learning}
Standard policy learning methods need all of Assumptions~\ref{asmp:cons}, \ref{asmp:unconf} and the positivity assumption to identify the value function of deterministic policies $d: \mathcal{X} \to \mathcal{A}$ by the outcome regression (OR), inverse probability weighting (IPW) and augmented IPW (AIPW) formulas:
\begin{align*}
V_{\rm OR}(d) &= E[E[Y \mid X, A = d(X)]], \quad
V_{\rm IPW}(d) = E\left[\frac{I\{A = d(X)\} Y}{Pr(A = d(X) \mid X)}\right], \\
V_{\rm AIPW}(d) &= E\left[E[Y \mid X, A = d(X)] + \frac{I\{A = d(X)\} (Y - E[Y \mid X, A = d(X)])}{Pr(A = d(X) \mid X)}\right],
\end{align*}
thus the optimal policies are given by $d_{\rm OR}^\ast = \arg\max_{d \in \mathcal{D}} V_{\rm OR}(d)$, $d_{\rm IPW}^\ast = \arg\max_{d \in \mathcal{D}} V_{\rm IPW}(d)$, and $d_{\rm AIPW}^\ast = \arg\max_{d \in \mathcal{D}} V_{\rm AIPW}(d)$, possibly under application-specific constraints. When the positivity is violated, it is error-prone to rely on the outcome regression model's extrapolation, and the IPW and AIPW estimators would fail due to division by zero.
\end{remark}

\subsection{Incremental Propensity Score Policies}

\cite{kennedy2019nonparametric} propose a new class of stochastic dynamic intervention, called incremental propensity score interventions, and show that these interventions are nonparametrically identified without requiring any positivity restrictions on the propensity scores. Specifically, their proposed intervention replaces the observational propensity score $\pi$ with a shifted version based on multiplying the odds of receiving treatment, ${\delta \pi(x)}/\{\delta \pi(x) + 1 - \pi(x)\}$, where the increment parameter $\delta \in (0, \infty)$ is user-specified and dictates the extent to which the propensity scores fluctuate from their actual observational values. Some motivation and examples, efficiency theory, and estimators for mean outcomes under these interventions are studied in detail by \cite{kennedy2019nonparametric}.

We propose a positivity-free (stochastic) policy learning framework based on the incremental propensity score interventions. Specifically, we consider the stochastic policy $d: \mathcal{X} \to [0, 1]$ that assigns treatment $1$ with probability 
\begin{equation}\label{eq:stoc.policy}
d(x) = \frac{\delta(x) \pi(x)}{\delta(x) \pi(x) + 1 - \pi(x)},
\end{equation}
where $\delta (x)$ enables individualized treatment assignment. We note that the choice of $d(x)$ in \eqref{eq:stoc.policy} is motivated by its interpretability and positivity-free. In particular, whenever $0 < \pi(x) < 1$, $\delta(x) = [{d(x) / \{1 - d(x)\}}]/[\pi(x) / \{1 - \pi(x)\}]$ is simply an odds ratio, indicating how the policy changes the odds of receiving treatment. When positivity is violated, we have that $d(x) = 0$ if $\pi(x) = 0$, and $d(x) = 1$ if $\pi(x) = 1$.

\section{Identification and Efficiency Theory}
\label{sec:ope.eff}

\subsection{Identification}

We first give formal identification results for the value function of incremental propensity score policies, which require no conditions on the propensity scores.

\begin{proposition}[Identification formulas]\label{prop:iden}
Under Assumptions~\ref{asmp:cons} and \ref{asmp:unconf}, the value function $V(d)$ can be nonparametrically identified by the outcome regression with incremental propensity score (OR-IPS) formula:
\begin{equation}\label{eq:value.or}
V_{\rm OR-IPS}(d) = E\left[\frac{\delta(X) \pi(X) \mu_1 (X) + \{1 - \pi(X)\} \mu_0(X)}{\delta(X) \pi(X) + 1 - \pi(X)}\right],
\end{equation}
where $\mu_{a} (X) = E[Y \mid X, A = a], a = 0, 1$ are the outcome regression functions or the inverse probability weighting of incremental propensity score (IPW-IPS) formula:
\begin{equation}\label{eq:value.ipw}
V_{\rm IPW-IPS}(d) = E\left[\frac{Y\{\delta(X) A + 1 - A\}}{\delta(X) \pi(X) + 1 - \pi(X)}\right].
\end{equation}
\end{proposition}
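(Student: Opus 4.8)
The plan is to prove both identification formulas directly from the definition of the value function and the two causal assumptions, treating the stochastic policy $d$ defined in \eqref{eq:stoc.policy} as fixing the treatment-assignment probability at each covariate level. The key observation is that for a stochastic policy, the potential outcome under $d$ has conditional mean $E[Y(d)\mid X] = d(X)\,E[Y(1)\mid X] + \{1-d(X)\}\,E[Y(0)\mid X]$, so that by the tower property $V(d) = E\bigl[d(X)\mu_1^{\rm po}(X) + \{1-d(X)\}\mu_0^{\rm po}(X)\bigr]$, where $\mu_a^{\rm po}(X) = E[Y(a)\mid X]$ are the conditional means of the potential outcomes.

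First I would establish that the potential-outcome regressions equal the observed-data regressions, i.e.\ $\mu_a^{\rm po}(X) = E[Y(a)\mid X] = E[Y(a)\mid X, A=a] = E[Y\mid X, A=a] = \mu_a(X)$. The first equality after conditioning uses unconfoundedness (Assumption~\ref{asmp:unconf}), which gives $Y(a)\perp A\mid X$ so that conditioning additionally on $A=a$ changes nothing; the next uses consistency (Assumption~\ref{asmp:cons}), $Y=Y(A)$, so that on the event $A=a$ we have $Y=Y(a)$. I would flag here the one subtlety that makes this positivity-free: these conditional expectations $E[Y\mid X,A=a]$ are only well-defined on the set where $\Pr(A=a\mid X)>0$, but the weighting in the policy $d(x)$ is designed so that the coefficient multiplying $\mu_a(x)$ vanishes exactly where $\pi(x)$ forces $\Pr(A=a\mid X)=0$. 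Concretely, when $\pi(x)=0$ we have $d(x)=0$ and the $\mu_1$ term drops out, and when $\pi(x)=1$ we have $d(x)=1$ and the $\mu_0$ term drops out, so the ill-defined regression is never actually invoked. Making this cancellation rigorous — rather than merely formal — is what I expect to be the main obstacle.

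Next I would substitute $d(X) = \delta(X)\pi(X)/\{\delta(X)\pi(X)+1-\pi(X)\}$ and $1-d(X) = \{1-\pi(X)\}/\{\delta(X)\pi(X)+1-\pi(X)\}$ into $V(d) = E[d(X)\mu_1(X)+\{1-d(X)\}\mu_0(X)]$, which immediately yields the OR-IPS formula \eqref{eq:value.or}. To obtain the IPW-IPS formula \eqref{eq:value.ipw}, I would verify that the two expressions agree by taking the conditional expectation of the integrand of \eqref{eq:value.ipw} given $X$. Writing the denominator $W(X)=\delta(X)\pi(X)+1-\pi(X)$, which depends only on $X$ and is strictly positive since $\delta>0$ and $\pi\in[0,1]$, I would compute
\begin{equation*}
E\left[\frac{Y\{\delta(X)A+1-A\}}{W(X)}\;\middle|\;X\right] = \frac{1}{W(X)}\Bigl(\delta(X)\,E[YA\mid X] + E[Y(1-A)\mid X]\Bigr).
\end{equation*}
Here $E[YA\mid X] = E[Y\mid X,A=1]\,\pi(X) = \mu_1(X)\pi(X)$ and $E[Y(1-A)\mid X]=\mu_0(X)\{1-\pi(X)\}$, again via consistency and the definition of $\pi$; these products are well-defined even at the positivity boundary because $\pi(X)=0$ kills the first term and $1-\pi(X)=0$ kills the second. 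Substituting gives precisely the integrand of \eqref{eq:value.or}, and taking the outer expectation and applying the tower property shows $V_{\rm IPW\text{-}IPS}(d)=V_{\rm OR\text{-}IPS}(d)=V(d)$.

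The crucial point throughout is that the strict positivity of the common denominator $W(X)$ is guaranteed by $\delta(X)\in(0,\infty)$ alone, with no positivity condition on $\pi$, so no division by zero ever occurs; this is the structural feature that lets the argument go through without positivity. I would therefore organize the proof to isolate this denominator and handle the boundary cases $\pi(X)\in\{0,1\}$ explicitly, so that the claimed cancellations are literal rather than heuristic.
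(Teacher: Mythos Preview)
Your proposal is correct and follows essentially the same approach as the paper's proof: tower property, then unconfoundedness and consistency to pass from $E[Y(a)\mid X]$ to $\mu_a(X)$, then substitution of the explicit form of $d(X)$. The only organizational difference is that for the IPW-IPS formula the paper applies consistency first (replacing $YA$ by $Y(1)A$) and then uses unconfoundedness to factor $E[Y(1)A\mid X]$, whereas you compute $E[YA\mid X]=\mu_1(X)\pi(X)$ directly as an observational identity and thereby reduce IPW-IPS to OR-IPS without re-invoking the causal assumptions; your explicit handling of the boundary cases $\pi(X)\in\{0,1\}$ and the positivity of $W(X)$ is also more careful than the paper's own argument, which leaves those cancellations implicit.
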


Proposition~\ref{prop:iden} shows that the value function can be identified by (i) a weighted average of the outcome regression functions $\mu_{0}, \mu_{1}$, where the weight on $\mu_{1}$ is given by the incremental propensity score $d(x)$ and the weight on $\mu_{0}$ is $1 - d(x)$; (ii) inverse probability weighting where each treated is weighted by the (inverse of the) propensity score plus some fractional contribution of its complement, i.e., $\pi(x) + (1 - \pi(x)) / \delta (x)$, and untreated units are weighted by this same amount, except the entire weight is further down-weighted by a factor of $\delta (x)$.

\subsection{Efficient Off-policy Evaluation}

Despite that simple plug-in OR-IPS and IPW-IPS estimators can be easily constructed from \eqref{eq:value.or} and \eqref{eq:value.ipw}, these estimators will only be $\sqrt{n}$-consistent when the outcome regression or propensity score models are correctly specified. This is usually unrealistic in practice. We use semiparametric efficiency theory to study the following statistical functional of $P$ from a nonparametric statistical model $\mathcal{M}$:
\begin{equation*}
\Psi (P) = V(d) = E_{P} \left[\frac{\delta(X) \pi(X) \mu_1 (X) + \{1 - \pi(X)\} \mu_0(X)}{\delta(X) \pi(X) + 1 - \pi(X)}\right],
\end{equation*}
and propose efficient estimators based on the efficient influence function.

\begin{proposition}[Semiparametric Efficiency]\label{prop:eif}
The efficient influence function of $\Psi (P)$ is
\begin{equation}\label{eq:value.eif}
\begin{split}
\phi (P) (O) &= \frac{A \delta(X) \{Y - \mu_1(X)\} + (1 - A) \{Y - \mu_0(X)\} + \delta(X) \pi(X) \mu_1 (X) + \{1 - \pi(X)\} \mu_0(X)}{\delta(X) \pi(X) + 1 - \pi(X)} \\ 
&\quad + \frac{\delta(X) \tau(X) \{A - \pi(X)\}}{\{\delta(X) \pi(X) + 1 - \pi(X)\}^2} - \Psi (P),
\end{split}
\end{equation}
where $\tau (x) = \mu_1(x) - \mu_0(x)$.
\end{proposition}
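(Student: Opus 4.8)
The plan is to exploit that $\mathcal{M}$ is nonparametric, so the tangent space at $P$ is all of $L_2^0(P)$ and the efficient influence function coincides with the unique gradient of $\Psi$. I would therefore compute the pathwise (Gateaux) derivative of $\Psi$ along an arbitrary regular one-dimensional submodel $\{P_t\}$ with $P_0=P$ and score $s(O)=\partial_t\log p_t(O)|_{t=0}$, and then read off $\phi$ from the defining identity $\partial_t\Psi(P_t)|_{t=0}=E_P[\phi(O)\,s(O)]$. The likelihood factorization $p(o)=p(y\mid x,a)\,p(a\mid x)\,p(x)$ induces the orthogonal score decomposition $s=s_{Y\mid X,A}+s_{A\mid X}+s_X$. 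Since $\Psi(P_t)$ depends on $P_t$ only through the marginal law of $X$, the propensity score $\pi$, and the regressions $\mu_0,\mu_1$, the chain rule gives
\begin{equation*}
\partial_t\Psi(P_t)\big|_{0}=E\!\left[g(X)\,s_X\right]+E\!\left[\frac{\partial g}{\partial\pi}(X)\,\dot\pi(X)\right]+\sum_{a\in\{0,1\}}E\!\left[\frac{\partial g}{\partial\mu_a}(X)\,\dot\mu_a(X)\right],
\end{equation*}
where $g(X)$ denotes the integrand of \eqref{eq:value.or}, $D(X)=\delta(X)\pi(X)+1-\pi(X)$, and dots denote $t$-derivatives at $0$.

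Next I would evaluate each ingredient. The outcome-regression partials are immediate: $\partial g/\partial\mu_1=\delta\pi/D$ and $\partial g/\partial\mu_0=(1-\pi)/D$, that is, the policy weights $d(X)$ and $1-d(X)$. The only genuine computation is $\partial g/\partial\pi$; carrying out the quotient rule and simplifying, the $\mu_1,\mu_0$ cross terms collapse and one is left with the compact expression $\partial g/\partial\pi=\delta\tau/D^2$, which already matches the second summand of \eqref{eq:value.eif}. For the nuisance derivatives I would use the standard submodel representations $\dot\pi(x)=E[(A-\pi(X))\,s\mid X=x]$ and $\dot\mu_a(x)=E[(Y-\mu_a(X))\,s\mid X=x,A=a]$, which follow from differentiating the defining conditional expectations and using $E[s_{Y\mid X,A}\mid X,A]=0$ and $E[s_{A\mid X}\mid X]=0$. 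Substituting the propensity representation into the $\pi$-term yields the contribution $\delta(X)\tau(X)\{A-\pi(X)\}/D(X)^2$, and taking $h_X(O)=g(X)-\Psi(P)$ reproduces the marginal term.

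The crux, and the step I expect to be the main obstacle, is the two outcome-regression terms. Converting $E[(\partial g/\partial\mu_a)\,\dot\mu_a(X)]$ into an inner product $E[h(O)\,s(O)]$ through the usual device introduces a factor $I(A=a)/P(A=a\mid X)$, i.e.\ $1/\pi(X)$ for $a=1$ and $1/\{1-\pi(X)\}$ for $a=0$, which is exactly where positivity would ordinarily be invoked. The key observation is that these factors cancel against the weights: for $a=1$ the weight $\partial g/\partial\mu_1=\delta\pi/D$ kills the $1/\pi$, leaving $A\delta\{Y-\mu_1\}/D$, while for $a=0$ the weight $(1-\pi)/D$ kills the $1/\{1-\pi\}$, leaving $(1-A)\{Y-\mu_0\}/D$; the surviving denominator is $D=\delta\pi+1-\pi\ge\min(\delta,1)>0$, requiring no overlap. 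I would also note that on the boundary $\pi(X)\in\{0,1\}$ the corresponding weight vanishes, so $g$ and $\phi$ remain well defined even though the associated regression is not identified there. Summing $h_X$, the two outcome-regression contributions, and the propensity contribution, and placing the first three over the common denominator $D$, yields exactly \eqref{eq:value.eif}; since $E_P[\phi]=0$ by construction and the tangent space is the whole of $L_2^0(P)$, this gradient is the efficient influence function. As a final cross-check I would verify the von Mises expansion of $\Psi$ against this $\phi$, which simultaneously confirms the Neyman-orthogonality and double-robustness structure exploited later.
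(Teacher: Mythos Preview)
Your proposal is correct and follows essentially the same pathwise-derivative route as the paper: both differentiate $\Psi(P_t)$ along a one-dimensional submodel, plug in the standard score representations $\dot\pi(x)=E[(A-\pi)s\mid X]$ and $\dot\mu_a(x)=E[(Y-\mu_a)s\mid X,A=a]$, and read off $\phi$ from the resulting inner product. Your organization via the partials $\partial g/\partial\mu_a$ and $\partial g/\partial\pi=\delta\tau/D^2$, together with the explicit observation that the inverse-propensity factors $1/\pi$ and $1/(1-\pi)$ cancel against the weights $\delta\pi/D$ and $(1-\pi)/D$, is a bit tidier than the paper's direct quotient-rule expansion and makes the positivity-free nature of the derivation more transparent, but the underlying argument is the same.
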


By Proposition~\ref{prop:eif}, the one-step bias-corrected estimator is given by
\begin{equation}\label{eq:one.step.est}
\hat{\Psi}_{\rm OS} = \Psi (\hat{P}) + P_{n} \phi (\hat{P}) (O) = \frac{1}{n} \sum_{i=1}^{n} \xi(\hat{P})(O_i),
\end{equation}
where we estimate $P$ by $\hat{P}$, and let $P_{n}$ denote the empirical distribution, and $\xi(P)(O) = \phi (P) (O) + \Psi (P)$ is the uncentered efficient influence function. This estimator can converge at fast parametric $\sqrt{n}$ rates and attain the efficiency bound, even when the propensity score $\pi (x)$ and outcome regression functions $\mu_{0}, \mu_{1}$ are modeled flexibly and estimated at rates slower than $\sqrt{n}$, as long as these nuisance functions are estimated consistently at rates faster than $n^{1/4}$. This allows much more flexible nonparametric methods and modern machine learning algorithms to be employed.

However, characterizing asymptotic properties of the estimator \eqref{eq:one.step.est} requires some empirical process conditions that restrict the flexibility and complexity of the nuisance estimators; otherwise, we will have overfitting bias and intractable asymptotic behaviors. See the asymptotic analysis in Section~\ref{sec:asym} and proofs thereof. To accommodate the wide use of modern machine learning algorithms that usually fail to satisfy the required empirical process conditions, we apply the cross-fitting procedure to obtain asymptotically normal and efficient estimators \citep{zheng2010asymptotic,chernozhukov2018double}. Suppose we randomly split the data into $K$ folds. Then the cross-fitting estimator is 
\begin{equation}\label{eq:cross.fit.est}
\hat{\Psi}_{\rm CF} = \frac{1}{K} \sum_{k=1}^{K} \hat{\Psi}_{k} = \frac{1}{K} \sum_{k=1}^{K} P_{n,k} \xi(P_{n,-k})(O),
\end{equation}
where $P_{n,k}$ and $P_{n,-k}$ denote the empirical measures on data from the $k$-fold and excluding the $k$-fold, respectively. That is, for $k = 1, \ldots, K$, nuisance estimators are constructed excluding the $k$-fold, and the value function $\hat{\Psi}_{k}$ is evaluated on the $k$-th fold; finally, the cross-fitting estimator is the average of the $K$ value estimators from $K$ folds.

\section{From Efficient Policy Evaluation to Learning}
\label{sec:meth}

In this section, we first present our proposed methods for policy learning.

As discussed in Section~\ref{sec:sf}, given a pre-specified policy class $\mathcal{D}$ (e.g., linear decision rules), we propose estimating the optimal treatment assignment rule $\hat{d}$ that solves (i) $\hat{d} = \arg\max_{d \in \mathcal{D}} \hat{V}(d)$, where $\hat{V}(d)$ is a value function estimator by OR-IPS \eqref{eq:value.or}, IPW-IPS \eqref{eq:value.ipw}, one-step \eqref{eq:one.step.est} or cross-fitting \eqref{eq:cross.fit.est}; or (ii) $\hat{d} = \arg\max_{d \in \mathcal{D}} \hat{V}(d)$ subject to $\hat{c}(d) \leq c$, when an application-specific constraint $c(d) \leq c$ is imposed, and $\hat{c}(d)$ is a constraint estimator which usually needs to be studied on a case-by-case basis.

We first review important examples of policy learning that fit into our framework.

\noindent{\bf Vanilla direct policy search.} The first example is what most existing work on policy learning has focused on, primarily for deterministic policies with a binary treatment. When the policy class is unrestricted, the optimal treatment assignment rule depends on the sign of the conditional average treatment effect for each individual unit, which cannot be extended to stochastic policies. Our proposed optimal incremental propensity score policies maximize the value function.

\noindent{\bf Fair policy learning.} In many decision-making scenarios, such as hiring, recommendation systems, and criminal justice, concerns have been raised regarding the fairness of decisions from the learning process \citep{chzhen2020fair}. Let $S \in \mathcal{S}$ denote the sensitive attribute. For randomized predictions $f: \mathcal{X} \times \mathcal{S} \to \Delta (\mathcal{A})$, popular fairness criteria include demographic parity (DP) \citep{calders2009building}:
\begin{equation}
E[f(X, S) \mid S = s] = E[f(X, S) \mid S = s'], \forall s, s' \in \mathcal{S},
\end{equation}
which says that $f(X, S)$ is independent from $S$, or equal opportunity (EO) \citep{hardt2016equality}:
\begin{equation}
E[f(X, S) \mid S = s, A = a] = E[f(X, S) \mid S = s', A = a], \forall s, s' \in \mathcal{S}, a \in \mathcal{A},
\end{equation}
which requires equal true positive and true negative rates. Following the same spirit, we consider fair policy learning tasks as the constrained optimization problem:
\begin{equation*}
\max_{d \in \mathcal{D}} V(d), \text{ subject to } f(d) \leq b,
\end{equation*}
where $f(d)$ is either the DP or EO metrics, which can be estimated by
\begin{equation*}
\hat{f}_{\rm DP}(d) = \left(\sum_{s \in \mathcal{S}} \left(\frac{\sum_{i=1}^{n} d(X_i) I\{S_i = s\}}{\sum_{i=1}^{n} I\{S_i = s\}} - \frac{\sum_{i=1}^{n} d(X_i)}{n} \right)^{2}\right)^{1/2},
\end{equation*}
or
\begin{equation*}
\hat{f}_{\rm EO}(d) = \left(\sum_{s \in \mathcal{S}} \left(\frac{\sum_{i=1}^{n} d(X_i) I\{S_i = s, A_i = 1\}}{\sum_{i=1}^{n} I\{S_i = s, A_i = 1\}} - \frac{\sum_{i=1}^{n} d(X_i) I\{A_i = 1\}}{\sum_{i=1}^{n} I\{A_i = 1\}} \right)^{2}\right)^{1/2},
\end{equation*}
and $b$ is a pre-specified tuning parameter.

\noindent{\bf Resource-limited policy learning.} In many real-world applications, the proportion of individuals who can receive the treatment is a priori limited due to a budget or a capacity constraint. So we consider the resource-limited policy learning tasks as the constrained optimization problem:
\begin{equation*}
\max_{d \in \mathcal{D}} V(d), \text{ subject to } E[d] \leq b,
\end{equation*}
where $b$ is the pre-specified budget or capacity.

\noindent{\bf Protect the vulnerable.} Since the optimal policy is typically defined as the maximizer of the expected potential outcome over the entire population, such a policy may be suboptimal or even detrimental to certain disadvantaged subgroups. \cite{fang2022fairness} propose the fairness-oriented optimal policy learning framework:
\begin{equation*}
\max_{d \in \mathcal{D}} V(d), \text{ subject to } Q_{\tau}(Y(d)) \geq b,
\end{equation*}
where $Q_{\tau}(Y(d)) = \inf\{t : F_{Y(d)}(t) \geq \tau\}$ is the $\tau$-th quantile of $Y(d)$, $F_{Y(d)}$ denotes the cumulative distribution function of $Y(d)$, and $b$ is a pre-specified protection threshold. Note that the quantile function can be estimated by $\hat{Q}_{\tau}(Y(d)) = \arg\min_{q} n^{-1} \sum_{i=1}^{n} c_{i}(d) \rho_{\tau} (Y_{i} - q)$, where $\rho_{\tau}(u) = u (\tau - I\{u < 0\})$ is the quantile loss function, and $c_{i}(d) = A_{i} d(X_{i}) + (1 - A_{i}) (1 - d(X_{i}))$.

Other examples in the literature include the counterfactual no-harm criterion by the principal stratification method \citep{li2023trustworthy}, (weakly) NP-hard knapsack problem \citep{luedtke2016optimal}, and instrumental variable methods \citep{qiu2021optimal}.

\section{Asymptotic Analysis of Policy Evaluation and Learning}
\label{sec:asym}

In this section, we first characterize the asymptotic distributions of our proposed one-step estimator~\eqref{eq:one.step.est} and the cross-fitted estimator~\eqref{eq:cross.fit.est} for off-policy evaluation.

\begin{theorem}\label{thm:donsker}
Assume the following conditions hold: {\rm (i)} $\|\hat{\pi}(x) - \pi(x)\|_{L_2} = o_p(n^{-1/4})$, $\|\hat{\mu}_{a} - \mu_{a}\|_{L_2} = o_p(n^{-1/4})$ for $a = 0, 1$; {\rm (ii)} $\phi (P)$ belongs to a Donsker class; {\rm (iii)} $|Y|$ and $|\delta (X)|$ are bounded in probability. For the one-step estimator, we have that $\sqrt{n} (\hat{\Psi}_{\rm OS} - \Psi (P)) \to \mathcal{N}(0, E[\phi^{2}])$.
\end{theorem}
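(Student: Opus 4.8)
The plan is to carry out the standard von Mises (first-order functional) expansion of the one-step estimator and to show that every term other than the sample average of the efficient influence function is $o_p(n^{-1/2})$. Writing $\xi(P)=\phi(P)+\Psi(P)$ for the uncentered efficient influence function of Proposition~\ref{prop:eif}, so that $\hat{\Psi}_{\rm OS}=P_n\xi(\hat P)$ by~\eqref{eq:one.step.est} and $P\phi(P)=0$, I would start from the exact decomposition
\begin{equation*}
\hat{\Psi}_{\rm OS}-\Psi(P)=(P_n-P)\phi(P)+(P_n-P)\{\phi(\hat P)-\phi(P)\}+\big\{\Psi(\hat P)-\Psi(P)+P\phi(\hat P)\big\},
\end{equation*}
obtained by adding and subtracting $P\xi(\hat P)$ and using that $\Psi(P)$ and $\Psi(\hat P)$ are constants. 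The last bracket is the second-order remainder $R_2$. The three pieces will be controlled by a central limit theorem, an empirical-process equicontinuity argument, and a direct bilinear bound, respectively.

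For the leading term, condition~(iii) together with the positivity-free structure described below ensures that $\phi(P)$ is bounded, hence $E[\phi^2]<\infty$, so the Lindeberg--L\'evy central limit theorem gives $\sqrt{n}\,(P_n-P)\phi(P)\to\mathcal N(0,E[\phi^2])$. For the second term, I would invoke the Donsker condition~(ii): since $\phi(\hat P)$ lies in a fixed Donsker class and, by condition~(i), $\|\phi(\hat P)-\phi(P)\|_{L_2}=o_p(1)$ (which follows from the $L_2$-consistency of $\hat\pi,\hat\mu_0,\hat\mu_1$ and the boundedness in~(iii)), asymptotic equicontinuity of the empirical process yields $(P_n-P)\{\phi(\hat P)-\phi(P)\}=o_p(n^{-1/2})$.

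The main work, and the step I expect to be the main obstacle, is showing $R_2=o_p(n^{-1/2})$. Writing $w=\delta\pi+1-\pi$ and $\hat w=\delta\hat\pi+1-\hat\pi$ and applying iterated expectations (conditioning on $X$, with $E[A\mid X]=\pi$ and $E[Y\mid X,A=a]=\mu_a$), the correction term simplifies to $P\phi(\hat P)=E_P[\{\delta\pi(\mu_1-\hat\mu_1)+(1-\pi)(\mu_0-\hat\mu_0)\}/\hat w+\delta\hat\tau(\pi-\hat\pi)/\hat w^2]$, which I would combine with $\Psi(\hat P)-\Psi(P)=E_P[(\delta\hat\pi\hat\mu_1+(1-\hat\pi)\hat\mu_0)/\hat w-(\delta\pi\mu_1+(1-\pi)\mu_0)/w]$. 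The crux is to verify that, after expanding $1/\hat w$ about $1/w$, all terms linear in $\hat\pi-\pi$ and in $\hat\mu_a-\mu_a$ cancel---the Neyman-orthogonality/mixed-bias property guaranteed by the efficient influence function---leaving only products such as $(\hat\pi-\pi)(\hat\mu_a-\mu_a)$ and $(\hat\pi-\pi)^2$. A simplification specific to this functional is that $w\ge\min(1,\delta)$ and $\hat w\ge\min(1,\delta)$ for every $\pi,\hat\pi\in[0,1]$, so that---this being precisely the positivity-free feature---the denominators are bounded away from zero whenever $\delta$ is bounded away from $0$ and $\infty$, with no overlap assumption on $\pi$. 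With bounded denominators, Cauchy--Schwarz gives $|R_2|\lesssim\|\hat\pi-\pi\|_{L_2}(\|\hat\mu_0-\mu_0\|_{L_2}+\|\hat\mu_1-\mu_1\|_{L_2})+\|\hat\pi-\pi\|_{L_2}^2$, which is $o_p(n^{-1/2})$ by the $n^{-1/4}$-rate condition~(i).

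Combining the three bounds gives $\sqrt{n}(\hat{\Psi}_{\rm OS}-\Psi(P))=\sqrt{n}\,(P_n-P)\phi(P)+o_p(1)$, and Slutsky's theorem delivers $\sqrt{n}(\hat{\Psi}_{\rm OS}-\Psi(P))\to\mathcal N(0,E[\phi^2])$. The bilinear control of $R_2$ is the delicate part, as it hinges on the explicit cancellation of first-order error terms and on the structural lower bound for the denominators; by contrast the empirical-process step is routine given the Donsker assumption, and the leading-term central limit theorem is immediate.
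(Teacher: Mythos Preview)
Your proposal is correct and follows essentially the same route as the paper: the identical three-term von Mises decomposition into a sample-average term handled by the CLT, an empirical-process term controlled by the Donsker assumption, and a second-order remainder bounded by the same product form $\|\hat\pi-\pi\|_{L_2}(\|\hat\mu_0-\mu_0\|_{L_2}+\|\hat\mu_1-\mu_1\|_{L_2})+\|\hat\pi-\pi\|_{L_2}^2$. If anything, you are slightly more explicit than the paper in two places---you note that asymptotic equicontinuity also requires $\|\phi(\hat P)-\phi(P)\|_{L_2}=o_p(1)$, and you spell out the structural lower bound $w\ge\min(1,\delta)$ on the denominator that makes the constants $O_p(1)$---but the overall strategy is the same.
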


\begin{theorem}\label{thm:cross.fit}
Assume the following conditions hold: {\rm (i)} $\|\hat{\pi}(x) - \pi(x)\|_{L_2} = o_p(n^{-1/4})$, $\|\hat{\mu}_{a} - \mu_{a}\|_{L_2} = o_p(n^{-1/4})$ for $a = 0, 1$; {\rm (ii)} $|Y|$ and $|\delta (X)|$ are bounded in probability. For the cross-fitting estimator, we have that $\sqrt{n} (\hat{\Psi}_{\rm CF} - \Psi (P)) \to \mathcal{N}(0, E[\phi^{2}])$.
\end{theorem}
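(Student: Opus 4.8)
The plan is to prove asymptotic normality of the cross-fitted estimator $\hat{\Psi}_{\rm CF}$ by analyzing each fold-specific estimator $\hat{\Psi}_k = P_{n,k} \xi(P_{n,-k})$ and exploiting the key advantage of cross-fitting: because the nuisance estimates $\hat{P} = P_{n,-k}$ are constructed from data disjoint from fold $k$, they are independent of the fold-$k$ data on which we evaluate the influence function. This independence replaces the Donsker condition (ii) of Theorem~\ref{thm:donsker} and is why sample-splitting permits arbitrarily complex machine-learning nuisance estimators.

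First I would decompose, for each fold $k$,
\begin{equation*}
\hat{\Psi}_k - \Psi(P) = (P_{n,k} - P)\phi(P) + (P_{n,k} - P)\{\phi(\hat{P}) - \phi(P)\} + \{P \phi(\hat{P}) + \Psi(\hat{P}) - \Psi(P)\},
\end{equation*}
using $\xi(\hat P) = \phi(\hat P) + \Psi(\hat P)$ and $P\phi(P)=0$. The first term is the leading empirical-process term that will produce the limiting normal law. The second term is an ``empirical process'' remainder; conditional on $P_{n,-k}$, it has mean zero, so I would bound its conditional variance by $\|\phi(\hat P)-\phi(P)\|_{L_2}^2 / n_k$, which is $o_p(n_k^{-1})$ under the rate conditions (i) together with boundedness (ii); Chebyshev then gives that this term is $o_p(n^{-1/2})$. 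The third term is the second-order remainder $R_2(\hat P, P) = P\phi(\hat P) + \Psi(\hat P) - \Psi(P)$, which by the Neyman-orthogonality (double robustness) of the efficient influence function from Proposition~\ref{prop:eif} is a product of the errors $(\hat\pi-\pi)$ and $(\hat\mu_a-\mu_a)$; a Cauchy-Schwarz bound shows $|R_2| \lesssim \|\hat\pi-\pi\|_{L_2}(\|\hat\mu_1-\mu_1\|_{L_2}+\|\hat\mu_0-\mu_0\|_{L_2}) = o_p(n^{-1/2})$ under condition (i).

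Averaging over the $K$ folds, $\sqrt{n}(\hat\Psi_{\rm CF}-\Psi(P)) = \sqrt{n}(P_n - P)\phi(P) + o_p(1)$, and the classical central limit theorem applied to the fixed, mean-zero, finite-variance function $\phi(P)$ delivers $\mathcal{N}(0, E[\phi^2])$. I expect the main obstacle to be the rigorous expansion of the second-order remainder $R_2(\hat P, P)$: one must compute $P\phi(\hat P)$ explicitly, using the specific form of $\phi$ in \eqref{eq:value.eif}, and verify that all first-order (single-error) terms cancel, leaving only the product structure. This requires carefully handling the ratio denominators $\delta(X)\pi(X)+1-\pi(X)$ and their estimated counterparts, and it is here that the boundedness of $|\delta(X)|$ in condition (ii) is essential to control the denominators away from degeneracy and to justify the Cauchy-Schwarz product bound uniformly.
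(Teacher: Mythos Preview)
Your proposal is correct and follows essentially the same route as the paper: the same three-term decomposition per fold, the same conditioning-plus-Chebyshev argument for the empirical-process remainder (which the paper packages as a separate lemma due to \cite{kennedy2020sharp}), and the same product-rate control of the von~Mises remainder. One small refinement: when you carry out the explicit computation of $R_2(\hat P,P)$, you will find that in addition to the cross-products $\|\hat\pi-\pi\|_{L_2}\|\hat\mu_a-\mu_a\|_{L_2}$ there is also a pure $\|\hat\pi-\pi\|_{L_2}^2$ term (because $\pi$ appears both in the numerator weights and in the denominator of $\phi$), which is likewise $o_p(n^{-1/2})$ under condition~(i) but should not be omitted from the bound.
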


Condition (i) of Theorems~\ref{thm:donsker} and \ref{thm:cross.fit} is commonly assumed such that the second-order remainder term is $o_p(1)$ \citep{kennedy2022semiparametric}. Condition (ii) of Theorems~\ref{thm:donsker} ensures the centered empirical process term is $o_p(1)$. Condition (iii) of Theorems~\ref{thm:donsker} and condition (ii) of Theorems~\ref{thm:cross.fit} are mild regularity conditions. The asymptotic variance of the one-step estimator can be consistently estimated by $\frac{1}{n} \sum_{i=1}^{n} \phi^{2} (\hat{P}) (O_{i})$, and the asymptotic variance of the cross-fitting estimator can be consistently estimated by $\frac{1}{K} \sum_{k=1}^{K} P_{n,k} \phi^{2} (\hat{P}_{-k}) (O)$.

Next, we prove asymptotic guarantees for the following generic off-policy learning problem:
\begin{equation*}
\max_{d \in \mathcal{D}} \hat{V}(d), \text{ subject to } \hat{c}(d) \leq c,
\end{equation*}
where $\hat{V}(d)$ is a value estimator of our proposed incremental propensity score policies, $\hat{c}(d)$ is an estimate of the constraint, and $c$ is a pre-specified criterion.

Consider a parametric policy class $\mathcal{D}(H)$ indexed by $\eta \in H$, where $H$ is a compact set. Let $\eta^\ast$ denote the true Euclidean parameter indexing the optimal policy. To simplify the notation, for $d(x;\eta) \in \mathcal{D}(H)$, we define $V(\eta) = V(d(x;\eta))$ and $c(\eta) = c(d(x;\eta))$.
\begin{theorem}\label{thm:cvx.policy}
Assume the following conditions hold: {\rm (i)} $d(x;\eta)$ is a continuously differentiable and convex function with respect to $\eta$; {\rm (ii)} $\hat{V}(\eta)$ and $\hat{c}(\eta)$ converge to $V(\eta)$ and $c(\eta)$ at rates $\sqrt{n}$. We have that {\rm (i)} $V(\hat{\eta}) - V(\eta^\ast) = O_p (n^{-1/2})$; {\rm (ii)} $\hat{V}(\hat{\eta}) - V(\eta^\ast) = O_p (n^{-1/2})$.
\end{theorem}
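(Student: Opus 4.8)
The plan is to prove a two-sided bound $|V(\hat{\eta}) - V(\eta^\ast)| = O_p(n^{-1/2})$ for claim (i) and then obtain claim (ii) essentially for free. Indeed, once (i) is in hand, writing $\hat{V}(\hat{\eta}) - V(\eta^\ast) = \{\hat{V}(\hat{\eta}) - V(\hat{\eta})\} + \{V(\hat{\eta}) - V(\eta^\ast)\}$ reduces (ii) to (i) plus a single estimation-error term $\hat{V}(\hat{\eta}) - V(\hat{\eta})$ evaluated at the (random) maximizer. It is worth recording at the outset that, by Proposition~\ref{prop:iden}, the value is linear in the policy, $V(d) = E[\mu_0(X)] + E[d(X)\tau(X)]$; hence under condition (i) the map $\eta \mapsto V(\eta)$ is continuously differentiable and therefore Lipschitz on the compact set $H$, a fact I will use repeatedly.

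First I would upgrade the pointwise $\sqrt{n}$-rates in condition (ii) to uniform rates $\sup_{\eta \in H}|\hat{V}(\eta) - V(\eta)| = O_p(n^{-1/2})$ and $\sup_{\eta \in H}|\hat{c}(\eta) - c(\eta)| = O_p(n^{-1/2})$. Because $\hat{V}(\eta)$ is an empirical average of the uncentered efficient influence function evaluated at the policy $d(\cdot;\eta)$, and $d(\cdot;\eta)$ is continuously differentiable in $\eta$, the centered process $\sqrt{n}\{\hat{V}(\eta) - V(\eta)\}$ has Lipschitz-in-$\eta$ increments; combined with compactness of $H$, a standard maximal-inequality / stochastic-equicontinuity argument yields the uniform rate (and likewise for $\hat{c}$). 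Boundedness of $Y$ and $\delta(X)$ from the companion theorems is what controls the envelope here.

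The core of the argument is a sensitivity analysis reconciling the population feasible set $\Theta = \{\eta : c(\eta) \le c\}$ with the empirical one $\hat{\Theta} = \{\eta : \hat{c}(\eta) \le c\}$, and this is the step I expect to be the main obstacle. For the upper bound, empirical feasibility of $\hat{\eta}$ and the uniform rate give $c(\hat{\eta}) \le \hat{c}(\hat{\eta}) + O_p(n^{-1/2}) \le c + O_p(n^{-1/2})$, so $\hat{\eta}$ is $O_p(n^{-1/2})$-feasible for the population program; under convexity of $d$ (hence of the feasible region) together with a Slater-type constraint qualification, the optimal-value map $c \mapsto \max\{V(\eta): c(\eta) \le c\}$ is locally Lipschitz, giving $V(\hat{\eta}) \le V(\eta^\ast) + O_p(n^{-1/2})$. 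For the lower bound I would construct a strictly interior comparison point: taking a Slater point $\tilde{\eta}$ with $c(\tilde{\eta}) < c$ and forming $\eta_t = (1-t)\eta^\ast + t\tilde{\eta}$, convexity yields $c(\eta_t) \le c - t\{c - c(\tilde{\eta})\}$, so choosing $t \asymp n^{-1/2}$ makes $\hat{c}(\eta_t) \le c$ with probability tending to one, i.e.\ $\eta_t \in \hat{\Theta}$. Empirical optimality and the uniform rate then give $\hat{V}(\hat{\eta}) \ge \hat{V}(\eta_t) \ge V(\eta_t) - O_p(n^{-1/2})$, while Lipschitzness of $V$ gives $V(\eta_t) \ge V(\eta^\ast) - O(t) = V(\eta^\ast) - O_p(n^{-1/2})$; invoking the uniform rate once more yields $V(\hat{\eta}) \ge V(\eta^\ast) - O_p(n^{-1/2})$.

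Combining the two bounds establishes claim (i), and the reduction noted above then delivers claim (ii). The delicate point throughout is the feasibility mismatch: without convexity and a constraint qualification the optimal value of a constrained program can be discontinuous in the constraint level, so an $O_p(n^{-1/2})$ perturbation of the constraint need not translate into an $O_p(n^{-1/2})$ change in value. Condition (i) is precisely what rules this out; in the unconstrained special case the mismatch vanishes and the value gap follows immediately from empirical optimality plus the uniform rate.
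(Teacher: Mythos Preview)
Your approach is correct and is a genuinely different route from the paper's. The paper proves conclusion (ii) first by invoking, as a black box, a result of \citet{shapiro1991asymptotic} on the asymptotics of approximated convex programs (which delivers $\hat{V}(\hat{\eta}) - V(\eta^\ast) = O_p(n^{-1/2})$ once the vector $(\hat{V},\hat{c})$ is shown to converge weakly in $C(H)$), and then derives (i) from (ii) via the decomposition $V(\hat{\eta}) - V(\eta^\ast) = \{V(\hat{\eta}) - \hat{V}(\hat{\eta})\} + \{\hat{V}(\hat{\eta}) - V(\eta^\ast)\}$. You instead do the opposite order, proving (i) directly by an explicit sensitivity/perturbation argument---constructing the interior comparison point $\eta_t$ and using local Lipschitzness of the optimal-value map---and then reading off (ii). Your argument is more elementary and makes the structural role of Slater's condition and convexity transparent, whereas the paper's proof is shorter but hides these ingredients inside Shapiro's lemma (which itself assumes Slater). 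Both proofs need a uniform-in-$\eta$ rate: you obtain it by a stochastic-equicontinuity argument over compact $H$, while the paper implicitly requires the stronger functional weak convergence to apply the Shapiro result. One small point worth flagging: your convex-combination step $c(\eta_t) \le c - t\{c - c(\tilde{\eta})\}$ presumes convexity of $\eta \mapsto c(\eta)$ (and of $H$), which is not literally stated in the theorem---but the paper's invocation of Shapiro's lemma needs exactly the same hypothesis, so this is a shared gap in the theorem statement rather than a flaw in your proof.
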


\begin{theorem}\label{thm:gc.policy}
Assume the following conditions hold: {\rm (i)} $\mathcal{D}$ is a Glivenko–Cantelli class; {\rm (ii)} $\hat{\pi}(x)$ and $\hat{\mu}_{a}(x)$ are uniformly consistent estimators of $\pi(x)$ and $\mu_{a}(x)$ for $a = 0, 1$; {\rm (iii)} $\forall d \in \mathcal{D}$, $m \in (0,1)$, it follows that $m d \in \mathcal{D}$. We have that {\rm (i)} $V(\hat{d}) - V(d) = o_p(1)$; {\rm (ii)} $\hat{V}(\hat{d}) - V(d) = o_p(1)$.
\end{theorem}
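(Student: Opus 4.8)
The plan is to reduce both claims to a uniform law of large numbers over the policy class, followed by an empirical-risk-minimization argument in which the scaling condition~(iii) transfers feasibility between the estimated and the true constrained problems. Write $d^\ast$ for the true constrained maximizer (the object denoted $d$ in the statement) and $\hat d$ for its estimate. The key structural fact, read off from the OR-IPS identification in Proposition~\ref{prop:iden}, is that the value is affine in the policy, $V(d) = E[d(X)\mu_1(X) + \{1-d(X)\}\mu_0(X)]$, so that $V(md) = (1-m)\,E[\mu_0] + m\,V(d)$ for every $m\in(0,1)$; this linearity is exactly what the scaling argument will exploit.

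First I would establish $\sup_{d\in\mathcal D}|\hat V(d) - V(d)| = o_p(1)$, together with the analogous $\sup_{d\in\mathcal D}|\hat c(d) - c(d)| = o_p(1)$ for the constraint. Taking OR-IPS as the representative estimator, I would split $\hat V(d) - V(d) = (P_n - P)[d\mu_1 + (1-d)\mu_0] + P_n[d(\hat\mu_1-\mu_1) + (1-d)(\hat\mu_0-\mu_0)]$. The first, empirical-process term uses the true bounded regressions, so the indexing class $\{d\mu_1 + (1-d)\mu_0 : d\in\mathcal D\}$ is a fixed Glivenko--Cantelli class by condition~(i) and preservation of that property under multiplication by a bounded function, hence $o_p(1)$ uniformly in $d$. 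The second, plug-in term is bounded uniformly by $\sup_x|\hat\mu_1 - \mu_1| + \sup_x|\hat\mu_0 - \mu_0|$ because $0\le d\le 1$, which is $o_p(1)$ by condition~(ii). The IPW-IPS and one-step estimators are handled identically: the integrand additionally involves $\pi$ only through the denominator $\delta\pi + 1 - \pi$, which is bounded below by $\min\{1,\inf\delta\}>0$ \emph{without} any positivity restriction, so the integrand is a Lipschitz image of $(d,\hat\pi,\hat\mu_0,\hat\mu_1)$, the empirical-process class remains Glivenko--Cantelli by preservation, and the plug-in bias is controlled by $\sup_x|\hat\pi-\pi|$ via condition~(ii). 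Since the constraints of Section~\ref{sec:meth} are empirical averages (or smooth functionals) of $d$, condition~(i) likewise yields uniform consistency of $\hat c$.

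Granting uniform convergence, I would prove claim~(i) by a two-sided argument anchored on condition~(iii). For the lower bound, scale the true optimum to $m d^\ast$ with $m\in(0,1)$; because the constraints are homogeneous, $c(m d^\ast)=m\,c(d^\ast)<c$ has strict slack, so uniform consistency of $\hat c$ makes $m d^\ast$ feasible for the estimated problem with probability tending to one, whence $\hat V(\hat d)\ge \hat V(m d^\ast) = V(m d^\ast) + o_p(1)$. Using $V(m d^\ast) = (1-m)E[\mu_0] + m V(d^\ast)$ and letting $m\uparrow 1$ gives $V(\hat d) \ge V(d^\ast) - o_p(1)$. The upper bound is symmetric: $\hat c(\hat d)\le c$ forces $c(\hat d)\le c + o_p(1)$, and scaling $\hat d$ down to $m\hat d$ produces a policy feasible for the true problem with probability tending to one, so optimality of $d^\ast$ gives $V(m\hat d)\le V(d^\ast)$; the affine identity then yields $V(\hat d)\le \{V(d^\ast) - (1-m)E[\mu_0]\}/m$, and $m\uparrow 1$ gives $V(\hat d)\le V(d^\ast) + o_p(1)$. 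Combining proves $V(\hat d) - V(d^\ast) = o_p(1)$, and claim~(ii) is immediate from $\hat V(\hat d) - V(d^\ast) = \{\hat V(\hat d) - V(\hat d)\} + \{V(\hat d) - V(d^\ast)\}$, the first bracket being $o_p(1)$ by the second step and the second by claim~(i).

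The main obstacle is the feasibility transfer in the third step. Since no positivity or constraint-margin condition is imposed, the estimated feasible set $\{\hat c(d)\le c\}$ and the true feasible set $\{c(d)\le c\}$ need not be nested, and when the constraint is active at $d^\ast$ one cannot directly compare the two optima. Condition~(iii) is precisely the device that resolves this: closedness of $\mathcal D$ under down-scaling, together with homogeneity of the constraints (so down-scaling strictly relaxes them) and the affine dependence of $V$ on $d$, manufactures an arbitrarily small but strictly positive slack that survives the $o_p(1)$ constraint error and lets $m\uparrow 1$ recover the optimal value from both sides. A secondary point to verify is that multiplication by the bounded regressions and the positivity-free denominator $\delta\pi+1-\pi$ preserves the Glivenko--Cantelli property, so the empirical-process term in the second step is genuinely uniformly negligible.
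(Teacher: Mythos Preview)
Your proposal is correct and follows essentially the same route as the paper: establish uniform-in-$d$ consistency of $\hat V$ and $\hat c$ by combining the Glivenko--Cantelli property (condition~(i)) with the uniform nuisance consistency (condition~(ii)), and then transfer feasibility between the true and estimated constraint sets via the down-scaling closure (condition~(iii)). The paper organizes the first claim through the three-term decomposition $V(d^\ast)-\hat V(\hat d)=V_n^{(1)}+V_n^{(2)}+V_n^{(3)}$ and handles the feasible-set mismatch $V_n^{(3)}$ by a Lipschitz bound $|\xi(\hat P)(O;d_1)-\xi(\hat P)(O;d_2)|\le L\sup_x|d_1(x)-d_2(x)|$, yielding $|V_n^{(3)}|\le L\epsilon/(c+\epsilon)$; you instead invoke the exact affine identity $V(md)=(1-m)E[\mu_0]+mV(d)$ and let $m\uparrow 1$. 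Your affine argument is a bit sharper and more transparent, but the two are otherwise interchangeable, and both rely on the same implicit homogeneity of the constraint under scaling.
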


Theorem~\ref{thm:cvx.policy} (i) establishes that the regret of the learned policy attains the convergence rate of $n^{-1/2}$, and (ii) shows that $\hat{V}(\hat{\eta})$ is a $\sqrt{n}$-consistent estimator of the optimal value function for parametric and convex policy classes under mild assumptions. Theorem~\ref{thm:gc.policy} (i) establishes that the regret of the learned policy vanishes, and (ii) shows $\hat{V}(\hat{\eta})$ is still a consistent estimator for GC classes.

\section{Experiments}
\label{sec:simu}

In this section, we conduct extensive experiments to evaluate the performance of our proposed positivity-free policy learning methods by comparison with standard policy learning methods. Replication code is available at \href{https://github.com/panzhaooo/positivity-free-policy-learning}{GitHub}.

\subsection{Simulation}

We consider the fair policy learning task under the demographic parity constraint and simulate
\begin{align*}
    S & \sim \text{Bernoulli}(0.5), \quad
    (X_1, X_2, X_3)  \sim \text{Uniform}(0, 1), \\
    A & \sim \text{Bernoulli}(\text{expit}(-1 - X_1 + 1.5 X_2 - 0.25 X_3 - 3.1 S)), \\
    Y(0) & \sim \mathcal{N}\{20 (1 + X_1 - X_2 + X_3^2 + \exp{(X_2)}), 20^2\}, \\
    Y(1) & \sim \mathcal{N}\{20 (1 + X_1 - X_2 + X_3^2 + \exp{(X_2)}) + 25 (3 - 5 X_1 + 2 X_2 - 3 X_3 + S), 20^2\},
\end{align*}
where $\expit: x \mapsto 1 / (1 + \exp{(-x)})$. We let $S$ denote the sensitive attribute and $X_1, X_2, X_3$ the common non-sensitive attributes. The treatment assignment mechanism yields variable propensity scores that can degrade the performance of weighting-based estimators in standard policy learning methods. For standard methods, we consider the policy class of linear rules $\mathcal{D}_{\rm linear} = \{d(s, x) = I\{(1, s, x_1, x_2, x_3) \beta > 0\}: \beta \in \mathbb{R}^{5}, \|\beta\|_{2} = 1\}$. For the incremental propensity score policies, we consider the class $\mathcal{D}_{\rm IPS} = \{d(s, x) = \delta(s, x;\beta) \pi(s, x) / \{\delta(s, x;\beta) \pi(s, x) + 1 - \pi(s, x)\}: \beta \in \mathbb{R}^{5}\}$, which is indexed by $\delta(s, x;\beta) = \exp{\{(1, s, x_1, x_2, x_3) \beta\}}$.

We estimate the outcome regression model $\mu(s,x)$ and the propensity score $\pi(s,x)$ using the generalized random forests \citep{athey2019generalized} implemented in the \texttt{R} package \texttt{grf}. The constrained optimization problems are solved by the derivative-free linear approximations algorithm \citep{powell1994direct}, implemented in the \texttt{R} package \texttt{nloptr}. The sample size is $n = 1000$, and the demographic parity threshold is $\tau = 0.01$. 

We compare the true values of the estimated optimal policies using test data with sample size $N = 10^5$. The true optimal value is approximated using the test data. Simulation results of $100$ Monte Carlo repetition are reported in Figure~\ref{fig:fair-dp}. When some estimated propensity scores are exactly $0$, the IPW and AIPW estimators would fail, and \texttt{NA} is returned. Three standard methods IPW, OR, and AIPW have the worst performance. The IPW-IPS estimator also has large variability, which is similarly reported in \cite{kennedy2019nonparametric}. The OR-IPS and efficient one-step estimators achieve the best performance with the highest value.

\begin{figure}[ht]
     \centering
     \begin{subfigure}[b]{0.475\textwidth}
         \centering
         \includegraphics[width=\textwidth]{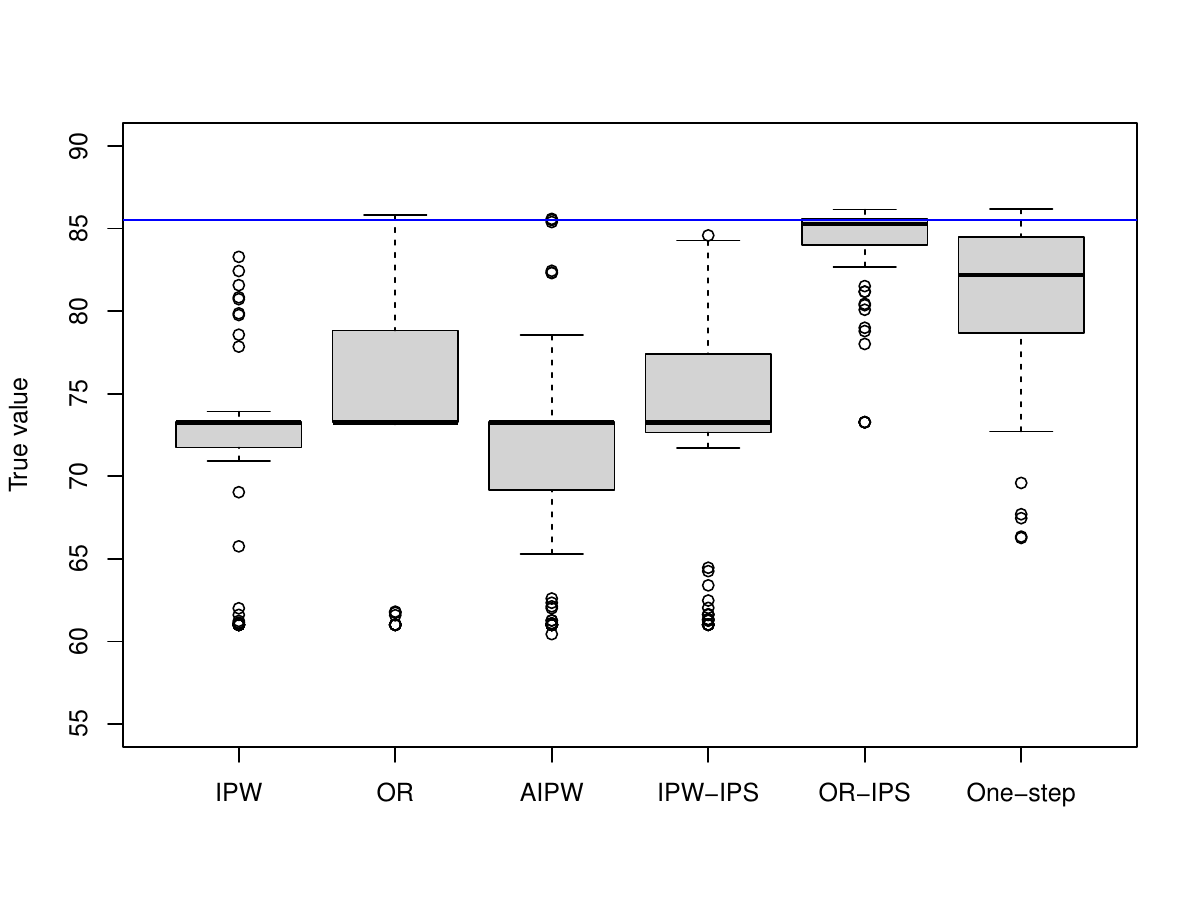}
         \caption{Simulations. }
         \label{fig:fair-dp}
     \end{subfigure}
     \hfill
     \begin{subfigure}[b]{0.475\textwidth}
         \centering
         \includegraphics[width=\textwidth]{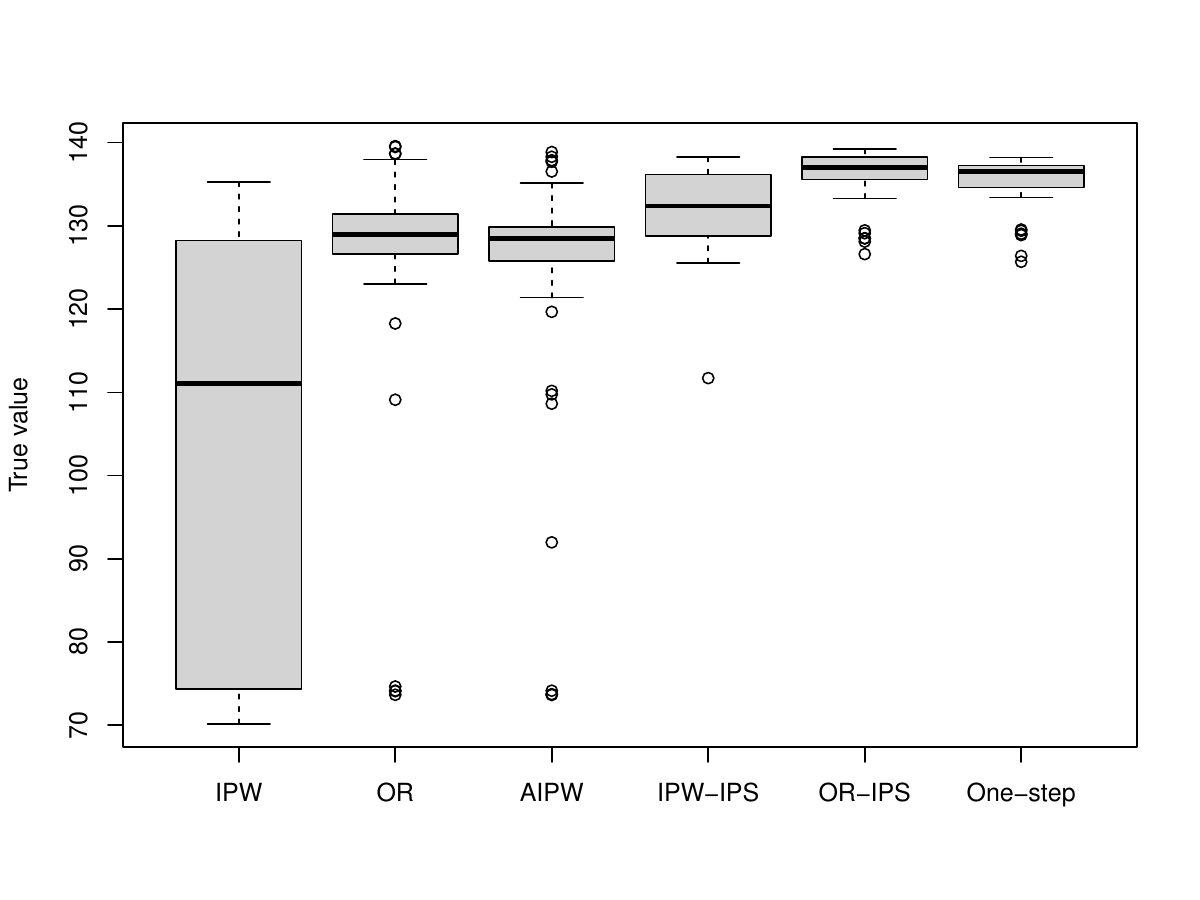}
         \caption{Diabetes data application. }
         \label{fig:diabetes-data}
     \end{subfigure}
        \caption{Performance of optimal policies under three standard methods (IPW, OR, AIPW) and our proposed three methods (IPW-IPS, OR-IPS, One-step). The blue line is the (approximate) true optimal value.}
\end{figure}

Additional simulation results are given in Section~\ref{sm.sec:simu} of the Supplementary Material. Specifically, we illustrate that our proposed policy learning methods have comparable performance when there is no positivity violation, and also illustrate the better performance of our proposed methods when using parametric models.

\subsection{Data application}

We illustrate our proposed methods using semi-synthetic data from the \texttt{Fairlearn} open source project \citep{weerts2023fairlearn}. Additional information on our data analysis is provided in Section~\ref{sm.sec:data} of the Supplementary Material.

The Diabetes dataset represents ten years (1999-2008) of clinical care at 130 US hospitals and integrated delivery networks \citep{strack2014impact}, and contains hospital records of patients diagnosed with diabetes who underwent laboratory tests and medications and stayed up to 14 days. Our application aims to learn the optimal policy for prescribing diabetic medication by maximizing the expected outcome under the demographic parity constraint. The sensitive attribute is race, and a violation of positivity exists in the data.

We include $7$ baseline covariates: \texttt{race}, \texttt{gender}, \texttt{age}, \texttt{time\_in\_hospital} (number of days between admission and discharge), \texttt{num\_lab\_procedures} (number of lab tests performed during the encounter), \texttt{num\_medications} (number of distinct generic names administered during the encounter) and \texttt{number\_diagnoses} (number of diagnoses). Under positivity violation, we are unable to identify the value function, e.g. relying on the outcome regression's extrapolation to learn the counterfactual outcomes on test data. Thus the potential outcomes are simulated as follows: $Y(0) \sim \mathcal{N}\{20 (1 + {\tt gender} - {\tt age} + {\tt time\_in\_hospital} + {\tt num\_lab\_procedures} + {\tt num\_medications} + {\tt num\_medications}^{2} + \exp{({\tt number\_diagnoses})}), 20^2\}$, and $Y(1) \sim \mathcal{N}\{20 (1 + {\tt gender} - {\tt age} + {\tt time\_in\_hospital} + {\tt num\_lab\_procedures} + {\tt num\_medications} + {\tt num\_medications}^{2} + \exp{({\tt number\_diagnoses})}) + 25 (3 - 5 {\tt age} + 2 {\tt time\_in\_hospital} - 3 {\tt num\_medications} + {\tt race}), 20^2\}$. The estimation setup and policy classes are the same as previous simulations. We run $50$ repetitions; each time we randomly select $500$ patients as training data to learn the optimal policy and $2000$ patients as test data to evaluate the performance. Empirical results are reported in Figure~\ref{fig:diabetes-data}. When the positivity violation is severer, the IPW estimator has extremely large variability, and we also observe that our proposed methods perform consistently better than the standard methods.

\section{Discussion}
\label{sec:disc}

This article proposes a general positivity-free stochastic policy learning framework using observational data, possibly subject to application-specific constraints. There are several interesting directions for future research. It is relevant to extend our methods to the more general case with multiple time points for treatment assignment, multiple treatment levels, or high-dimensional models \citep{WEI2023105460}, where positivity is even more likely to be violated. The incremental propensity score approach can also be extended to account for common issues such as covariate shift \citep{zhao2023efficient,lei2023policy}, censoring and dropout \citep{cui2023estimating}, and truncation by death \citep{chu2023multiply}.

\subsubsection*{Acknowledgements}

Pan Zhao and Julie Josse are supported in part by the French National Research Agency ANR-16-IDEX-0006. Shu Yang is partially supported by NSF SES 2242776, NIH 1R01AG066883 and 1R01ES031651,  and FDA 1U01FD007934.

\bibliographystyle{plainnat}
\bibliography{Bibliography}

\newpage
\appendix

\begin{center}
{\large\bf SUPPLEMENTARY MATERIAL}
\end{center}

\section{Proof of Proposition~\ref{prop:iden}}

The proof of our identification results is straightforward, following similar arguments in \cite{kennedy2019nonparametric}. First, we prove the OR-IPS formula:
\begin{align*}
V(d) &= E[Y(d)] \\
&= E[Y(1)d(X) + Y(0)(1 - d(X))] \\
&= E[E[Y(1)d(X) + Y(0)(1 - d(X)) \mid X]] \\
&= E[d(X)E[Y(1) \mid X] + (1 - d(X))E[Y(0) \mid X]] \\
&= E[d(X)E[Y \mid X, A = 1] + (1 - d(X))E[Y \mid X, A = 0]] \\
&= E\left[\frac{\delta(X) \pi(X)}{\delta(X) \pi(X) + 1 - \pi(X)} \mu_{1}(X) + \frac{1 - \pi(X)}{\delta(X) \pi(X) + 1 - \pi(X)} \mu_{0}(X)\right] \\
&= E\left[\frac{\delta(X) \pi(X) \mu_1 (X) + \{1 - \pi(X)\} \mu_0(X)}{\delta(X) \pi(X) + 1 - \pi(X)}\right].
\end{align*}

Next, we prove the IPW-IPS formula:
\begin{align*}
& E\left[\frac{Y\{\delta(X) A + 1 - A\}}{\delta(X) \pi(X) + 1 - \pi(X)}\right] \\
&= E\left[\frac{Y A \delta(X)}{\delta(X) \pi(X) + 1 - \pi(X)} + \frac{Y (1 - A)}{\delta(X) \pi(X) + 1 - \pi(X)}\right] \\
&= E\left[\frac{Y(1) A \delta(X)}{\delta(X) \pi(X) + 1 - \pi(X)} + \frac{Y(0) (1 - A)}{\delta(X) \pi(X) + 1 - \pi(X)}\right] \\
&= E\left[E\left[\frac{Y(1) A \delta(X)}{\delta(X) \pi(X) + 1 - \pi(X)} + \frac{Y(0) (1 - A)}{\delta(X) \pi(X) + 1 - \pi(X)} \mid X \right]\right] \\
&= E\left[\frac{E[Y(1) A \mid X] \delta(X)}{\delta(X) \pi(X) + 1 - \pi(X)} + \frac{E[Y(0) (1 - A) \mid X]}{\delta(X) \pi(X) + 1 - \pi(X)}\right] \\
&= E\left[Y(1) \frac{E[A \mid X] \delta(X)}{\delta(X) \pi(X) + 1 - \pi(X)} + Y(0) \frac{E[(1 - A) \mid X]}{\delta(X) \pi(X) + 1 - \pi(X)}\right] \\
&= E[E[Y(1)d(X) + Y(0)(1 - d(X)) \mid X]] \\
&= V(d).
\end{align*}

\section{Proof of Proposition~\ref{prop:eif}}
\label{sm.sec:proof.eif}

We derive the efficient influence function for the following statistical functional:
\begin{equation*}
\Psi (P) = E_{P} \left[\frac{\delta(X) \pi(X) \mu_1 (X) + \{1 - \pi(X)\} \mu_0(X)}{\delta(X) \pi(X) + 1 - \pi(X)}\right].
\end{equation*}

For a given distribution $P$ in the nonparametric statistical model $\mathcal{M}$, we let $p$ denote the density of $P$ with respect to some dominating measure $\nu$. For all bounded $h \in L_{2}(P)$, define the parametric submodel $p_{\epsilon} = (1 + \epsilon h) p$, which is valid for small enough $\epsilon$ and has score $h$ at $\epsilon = 0$. We would establish that $\Psi (P)$ is pathwise differentiable with respect to $\mathcal{M}$ at $P$ with efficient influence function $\phi(P)$ if we have that for any $P \in \mathcal{M}$,
\begin{equation*}
\frac{\partial}{\partial\epsilon} \Psi (P_{\epsilon}) \bigg|_{\epsilon = 0} = \int \phi (P) (o) h(o) dP(o).
\end{equation*}

We denote $\pi_{\epsilon} (x) = E_{P_{\epsilon}}[A \mid X = x]$, $\mu_{a,\epsilon} (x) = E_{P_{\epsilon}}[Y \mid X = x, A = a]$, $S = \partial \log p_{\epsilon} / \partial\epsilon$, and can compute
\begin{align*}
\frac{\partial}{\partial\epsilon} \Psi (P_{\epsilon}) \bigg|_{\epsilon = 0} &= \frac{\partial}{\partial\epsilon} E_{P_{\epsilon}}\left[\frac{\delta(X) \pi_{\epsilon}(X) \mu_{1,\epsilon} (X) + \{1 - \pi_{\epsilon}(X)\} \mu_{0,\epsilon}(X)}{\delta(X) \pi_{\epsilon}(X) + 1 - \pi_{\epsilon}(X)}\right] \bigg|_{\epsilon = 0} \\
&= \frac{\partial}{\partial\epsilon} E_{P}\left[(1 + \epsilon S) \frac{\delta(X) \pi_{\epsilon}(X) \mu_{1,\epsilon} (X) + \{1 - \pi_{\epsilon}(X)\} \mu_{0,\epsilon}(X)}{\delta(X) \pi_{\epsilon}(X) + 1 - \pi_{\epsilon}(X)}\right] \bigg|_{\epsilon = 0} \\
&= E_{P}\left[S \frac{\delta(X) \pi(X) \mu_1 (X) + \{1 - \pi(X)\} \mu_0(X)}{\delta(X) \pi(X) + 1 - \pi(X)}\right] \\
&\quad + E_{P}\left[\frac{1}{\delta(X) \pi(X) + 1 - \pi(X)} \left(\pi(X) \frac{\partial}{\partial\epsilon} \mu_{1,\epsilon}(X) \bigg|_{\epsilon = 0} + \mu_1 (X) \frac{\partial}{\partial\epsilon} \pi_{\epsilon} (X) \bigg|_{\epsilon = 0}\right)\right] \\
&\quad + E_{P}\left[\frac{1}{\delta(X) \pi(X) + 1 - \pi(X)} \left(\{1 - \pi(X)\} \frac{\partial}{\partial\epsilon} \mu_{0,\epsilon}(X) \bigg|_{\epsilon = 0} - \mu_0(X) \frac{\partial}{\partial\epsilon} \pi_{\epsilon} (X) \bigg|_{\epsilon = 0}\right)\right] \\
&\quad - E_{P}\left[\frac{\delta(X) \pi(X) \mu_1 (X) + \{1 - \pi(X)\} \mu_0(X)}{\{\delta(X) \pi(X) + 1 - \pi(X)\}^2}\left(\delta(X) \frac{\partial}{\partial\epsilon} \pi_{\epsilon} (X) \bigg|_{\epsilon = 0} - \frac{\partial}{\partial\epsilon} \pi_{\epsilon} (X) \bigg|_{\epsilon = 0}\right)\right].
\end{align*}

Then we need to compute
\begin{align*}
\frac{\partial}{\partial\epsilon} \pi_{\epsilon} (X) \bigg|_{\epsilon = 0} &= \frac{\partial}{\partial\epsilon} \frac{\pi (X) + \epsilon E_{P}[S A \mid X]}{1 + \epsilon E_{P}[S \mid X]} \bigg|_{\epsilon = 0} \\
&= E_{P}[S A \mid X] - \pi (X) E_{P}[S \mid X] \\
&= E_{P}[S (A - \pi (X)) \mid X],
\end{align*}
and for $a = 0, 1$,
\begin{align*}
\frac{\partial}{\partial\epsilon} \mu_{a,\epsilon} (X) \bigg|_{\epsilon = 0} &= \frac{\partial}{\partial\epsilon} \frac{\mu_{a} (X) + \epsilon E_{P}[S Y \mid X, A = a]}{1 + \epsilon E_{P}[S \mid X, A = a]} \bigg|_{\epsilon = 0} \\
&= E_{P}[S Y \mid X, A = a] - \mu_{a} (X) E_{P}[S \mid X, A = a] \\
&= E_{P}[S (Y - \mu_{a} (X)) \mid X, A = a].
\end{align*}

Combining the above derivations, we obtain that 
\begin{align*}
\phi(P) (O) &= \frac{A \delta(X) \{Y - \mu_1(X)\} + (1 - A) \{Y - \mu_0(X)\} + \delta(X) \pi(X) \mu_1 (X) + \{1 - \pi(X)\} \mu_0(X)}{\delta(X) \pi(X) + 1 - \pi(X)} \\ 
&\quad + \frac{\delta(X) \tau(X) \{A - \pi(X)\}}{\{\delta(X) \pi(X) + 1 - \pi(X)\}^2} - \Psi(P),
\end{align*}
which yields the result.

\section{Proof of Theorem~\ref{thm:donsker}}
\label{sm.sec:proof.thm.donsker}

We first outline the inferential strategy from semiparametric theory. Consider a statistical model $\mathcal{M}$ for distributions $\tilde{P}$, with $P$ denoting the true distribution. Under sufficient smoothness conditions, we have the following von Mises expansion for $\Psi (\tilde{P})$:
\begin{equation*}
\Psi (\tilde{P}) = \Psi (P) - \int \phi(\tilde{P})(o) dP(o) + {\rm Rem}(\tilde{P},P),
\end{equation*}
where $\phi (P)$ is the influence function derived in Section~\ref{sm.sec:proof.eif} such that $\int \phi(P)(o) dP(o) = 0$, and ${\rm Rem}(\tilde{P},P) = O(\|\tilde{P} - P\|^2)$ is a second-order reminder term that we will analyze later.

Let $\hat{P}$ be an estimator of $P$, then we obtain the following one-step estimator of $\Psi (P)$:
\begin{equation*}
\hat{\Psi} = \Psi(\hat{P}) + \int \phi(\hat{P})(o) dP_{n}(o),
\end{equation*}
where $P_{n}$ is the empirical distribution. 

Next, we characterize the asymptotic properties of $\hat{\Psi}$. Note that 
\begin{align*}
\hat{\Psi} - \Psi (P) &= \left\{\Psi(\hat{P}) + \int \phi(\hat{P})(o) dP_{n}(o)\right\} - \Psi (P) \\
&= \left\{\Psi(\hat{P}) - \Psi (P)\right\} + \int \phi(\hat{P})(o) dP_{n}(o) \\
&= - \int \phi(\hat{P})(o) dP(o) + {\rm Rem}(\hat{P},P) + \int \phi(\hat{P})(o) dP_{n}(o) \\
&= \int \phi(\hat{P})(o) d\left\{P_{n}(o) - P(o)\right\} + {\rm Rem}(\hat{P},P) \\
&= \int \phi(P)(o) dP_{n}(o) + \int \left\{\phi(\hat{P})(o) - \phi(P)(o)\right\} d\left\{P_{n}(o) - P(o)\right\} + {\rm Rem}(\hat{P},P).
\end{align*}

Therefore, $\sqrt{n} \left\{\hat{\Psi} - \Psi (P)\right\}$ is expressed as the following three terms:
\begin{align*}
\sqrt{n} \left\{\hat{\Psi} - \Psi (P)\right\} &= \sqrt{n} \int \phi(P)(o) dP_{n}(o) \\
&\quad + \sqrt{n} \int \left\{\phi(\hat{P})(o) - \phi(P)(o)\right\} d\left\{P_{n}(o) - P(o)\right\} \\
&\quad + \sqrt{n} {\rm Rem}(\hat{P},P).
\end{align*}

By the central limit theorem, $\sqrt{n} \int \phi(P)(o) dP_{n}(o)$ is asymptotically normal with the asymptotic variance given by $E[\phi^2(P)(O)]$.

We assume that $\phi(P)$ belongs to a Donsker class, so we have that the centered empirical process
\begin{equation*}
\sqrt{n} \int \left\{\phi(\hat{P})(o) - \phi(P)(o)\right\} d\left\{P_{n}(o) - P(o)\right\} = o_p(1).
\end{equation*}

Finally, we characterize the second-order remainder term:
\begin{equation*}
{\rm Rem}(\hat{P},P) = \Psi (\hat{P}) - \Psi (P) + E_{P}[\phi(\hat{P})(O)].
\end{equation*}

We have that
\begin{equation*}
\Psi (P) = E_{P}\left[\frac{\delta(X) \pi(X) \mu_1 (X) + \{1 - \pi(X)\} \mu_0(X)}{\delta(X) \pi(X) + 1 - \pi(X)}\right],
\end{equation*}
and
\begin{align*}
&E_{P}[\phi(\hat{P})(O)] \\
&= E_{P}\left[\frac{A \delta(X) \{Y - \hat{\mu}_1(X)\} + (1 - A) \{Y - \hat{\mu}_0(X)\} + \delta(X) \hat{\pi}(X) \hat{\mu}_1 (X) + \{1 - \hat{\pi}(X)\} \hat{\mu}_0(X)}{\delta(X) \hat{\pi}(X) + 1 - \hat{\pi}(X)} \right.\\ 
&\left.\qquad\quad + \frac{\delta(X) \hat{\tau}(X) \{A - \hat{\pi}(X)\}}{\{\delta(X) \hat{\pi}(X) + 1 - \hat{\pi}(X)\}^2}\right] - \Psi (\hat{P}).
\end{align*}

Combining the derivations above, we have that
\begin{align*}
\left|{\rm Rem}(\hat{P},P)\right| &\leq \hat{C}_1 \|\hat{\mu}_1(X) - \mu_1(X)\|_{L_2} \times \|\hat{\pi}(X) - \pi(X)\|_{L_2} \\
&\quad + \hat{C}_2 \|\hat{\mu}_0(X) - \mu_0(X)\|_{L_2} \times \|\hat{\pi}(X) - \pi(X)\|_{L_2} \\
&\quad + \hat{C}_3 \|\hat{\pi}(X) - \pi(X)\|^{2}_{L_2},
\end{align*}
where $\hat{C}_{1}$, $\hat{C}_{2}$ and $\hat{C}_{3}$ are $O_{p}(1)$. We assume that $\|\hat{\pi}(x) - \pi(x)\|_{L_2} = o_p(n^{-1/4})$, and $\|\hat{\mu}_{a} - \mu_{a}\|_{L_2} = o_p(n^{-1/4})$ for $a = 0, 1$. Therefore, we have that $\sqrt{n} {\rm Rem}(\hat{P},P) = o_p(1)$. That is, we conclude that
\begin{equation*}
\sqrt{n} \left\{\hat{\Psi} - \Psi (P)\right\} \to \mathcal{N}(0, E[\phi^2(P)(O)]),
\end{equation*}
which completes the proof.

\section{Proof of Theorem~\ref{thm:cross.fit}}

Essentially, we need to prove that the centered empirical process is $o_p(1)$, when we avoid Donsker conditions by using the cross-fitting technique. We first review a useful lemma from \cite{kennedy2020sharp}.

\begin{lemma}\label{lem:cf}
Consider two independent samples $\mathcal{O}_1 = (O_1, \ldots, O_n)$ and $\mathcal{O}_2 = (O_{n+1}, \ldots, O_{N})$ drawn from the distribution $\mathbb{P}$. Let $\hat{f}(o)$ be a function estimated from $\mathcal{O}_2$, and $\mathbb{P}_n$ the empirical measure over $\mathcal{O}_1$, then we have
\begin{equation*}
    (\mathbb{P}_n - \mathbb{P})(\hat{f} - f) = O_{\mathbb{P}}\left(\frac{\|\hat{f} - f\|}{\sqrt{n}}\right).
\end{equation*}
\end{lemma}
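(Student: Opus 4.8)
The plan is to exploit \emph{sample splitting}: since $\hat{f}$ is computed only from $\mathcal{O}_2$ and the two samples are independent, conditioning on $\mathcal{O}_2$ freezes $\hat{f}$ into a fixed (non-random) function while leaving $\mathcal{O}_1$ an i.i.d. sample from $\mathbb{P}$. This converts the empirical-process term into a centered average of i.i.d. summands whose second moment is easy to control, after which a conditional Chebyshev inequality delivers the stated rate.

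First I would set $g := \hat{f} - f$ and condition throughout on $\mathcal{O}_2$. Writing $\mathbb{P}_n$ for the empirical measure over $\mathcal{O}_1 = (O_1,\ldots,O_n)$, we have
\[
(\mathbb{P}_n - \mathbb{P}) g = \frac{1}{n} \sum_{i=1}^n \bigl\{ g(O_i) - \mathbb{P} g \bigr\}.
\]
Conditional on $\mathcal{O}_2$, the summands are i.i.d. with mean zero, so the conditional mean of $(\mathbb{P}_n-\mathbb{P})g$ is zero and its conditional second moment is
\[
\mathbb{E}\Bigl[ \bigl( (\mathbb{P}_n - \mathbb{P}) g \bigr)^2 \,\big|\, \mathcal{O}_2 \Bigr] = \frac{1}{n} \operatorname{Var}\bigl( g(O_1) \mid \mathcal{O}_2 \bigr) \le \frac{1}{n} \, \mathbb{E}\bigl[ g(O_1)^2 \mid \mathcal{O}_2 \bigr] = \frac{\|\hat{f} - f\|^2}{n},
\]
where $\|\cdot\|$ denotes the $L_2(\mathbb{P})$ norm and the last equality holds because $g$ is fixed given $\mathcal{O}_2$ while $O_1 \sim \mathbb{P}$ is independent of $\mathcal{O}_2$.

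Next I would convert this second-moment bound into a probability statement via the conditional Chebyshev inequality. For any $t > 0$,
\[
\mathbb{P}\!\left( \frac{ \sqrt{n}\, \bigl| (\mathbb{P}_n - \mathbb{P}) g \bigr| }{ \|\hat{f} - f\| } > t \;\Big|\; \mathcal{O}_2 \right) \le \frac{ \mathbb{E}\bigl[ ((\mathbb{P}_n - \mathbb{P}) g)^2 \mid \mathcal{O}_2 \bigr] }{ t^2 \, \|\hat{f} - f\|^2 / n } \le \frac{1}{t^2}.
\]
Crucially, the right-hand side is a constant free of $\mathcal{O}_2$, so the bound is \emph{uniform} over the conditioning sample. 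Taking expectations over $\mathcal{O}_2$ (the tower property) therefore preserves it, yielding the unconditional tail bound $\mathbb{P}(\sqrt{n}\,|(\mathbb{P}_n-\mathbb{P})g|/\|\hat{f} - f\| > t) \le 1/t^2$, which is exactly the assertion that the ratio is bounded in probability, i.e. $(\mathbb{P}_n - \mathbb{P})(\hat{f} - f) = O_{\mathbb{P}}(\|\hat{f} - f\|/\sqrt{n})$.

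The main obstacle to guard against is the randomness of $\|\hat{f} - f\|$, which appears in the target rate itself, so it cannot be treated as a deterministic normalizing sequence. The resolution, and the delicate point of the argument, is that the conditional Chebyshev bound holds with a constant ($1/t^2$) that does not depend on $\mathcal{O}_2$ at all, so integrating out the conditioning variable is harmless even though $\|\hat{f} - f\|$ is itself a function of $\mathcal{O}_2$. A minor degenerate case, $\|\hat{f} - f\| = 0$, forces $(\mathbb{P}_n - \mathbb{P}) g = 0$ as well and can be excluded (or handled by convention) without affecting the conclusion.
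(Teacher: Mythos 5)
Your proof is correct and follows essentially the same route as the paper's: condition on $\mathcal{O}_2$ so that $\hat{f}$ is fixed, bound the conditional variance of $(\mathbb{P}_n - \mathbb{P})(\hat{f} - f)$ by $\|\hat{f} - f\|^2 / n$, and apply Chebyshev's inequality conditionally before integrating out $\mathcal{O}_2$. Your explicit remarks on why the random normalizer $\|\hat{f} - f\|$ causes no trouble (the conditional tail bound $1/t^2$ is free of $\mathcal{O}_2$) and on the degenerate case $\|\hat{f} - f\| = 0$ are slightly more careful than the paper's write-up, but the argument is the same.
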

\begin{proof}
First note that by conditioning on $\mathcal{O}_2$, we obtain that
\begin{equation*}
    \mathbb{E} \left\{\mathbb{P}_n(\hat{f} - f) \,\big|\, \mathcal{O}_2\right\} = \mathbb{E} (\hat{f} - f \,|\, \mathcal{O}_2) = \mathbb{P}(\hat{f} - f),
\end{equation*}
and the conditional variance is
\begin{equation*}
    var\{(\mathbb{P}_n - \mathbb{P})(\hat{f} - f) \,|\, \mathcal{O}_2\} = var\{\mathbb{P}_n (\hat{f} - f) \,|\, \mathcal{O}_2\} = \frac{1}{n}var(\hat{f} - f \,|\, \mathcal{O}_2) \leq \|\hat{f} - f\|^2 / n,
\end{equation*}
therefore by the Chebyshev's inequality we have that
\begin{equation*}
    \mathbb{P}\left\{\frac{|(\mathbb{P}_n - \mathbb{P})(\hat{f} - f)|}{\|\hat{f} - f\|^2 / n} \geq t \right\} = \mathbb{E}\left[\mathbb{P}\left\{\frac{|(\mathbb{P}_n - \mathbb{P})(\hat{f} - f)|}{\|\hat{f} - f\|^2 / n} \geq t \,\bigg|\, \mathcal{O}_2\right\}\right] \leq \frac{1}{t^2},
\end{equation*}
thus for any $\epsilon > 0$ we can pick $t = 1 / \sqrt{\epsilon}$ so that the probability above is no more than $\epsilon$, which yields the result.
\end{proof}

Next, we characterize the asymptotic properties of the cross-fitted estimator $\hat{\Psi}_{\rm CF}$. Following similar steps as Section~\ref{sm.sec:proof.thm.donsker}, we have that
\begin{equation*}
\sqrt{n} \left\{\hat{\Psi}_{\rm CF} - \Psi (P)\right\} = \sqrt{n} \int \phi(P)(o) dP_{n}(o) + \frac{1}{\sqrt{K}} \sum_{k=1}^{K} \sqrt{n_k} (R_{k,1} + R_{k,2}),
\end{equation*}
where $R_{k,1} = \int \left\{\phi(\hat{P}_{-k})(o) - \phi(P)(o)\right\} d\left\{P_{n,k}(o) - P(o)\right\}$, $R_{k,2} = {\rm Rem}(\hat{P}_{-k},P)$.

We note that 
\begin{align*}
R_{k,1} &= \int \left\{\phi(\hat{P}_{-k})(o) - \phi(P)(o)\right\} d\left\{P_{n,k}(o) - P(o)\right\} \\
&= \int \left\{\xi(\hat{P}_{-k})(o) - \xi(P)(o)\right\} d\left\{P_{n,k}(o) - P(o)\right\},
\end{align*}
where $\xi (P)(o) = \phi(P)(o) + \Psi (P)$, and by Lemma~\ref{lem:cf}, we have that 
\begin{equation*}
\sqrt{n_k} R_{k,1} = O_p \left(\|\xi(\hat{P}_{-k}) - \xi(P)\|_{L_2}\right).
\end{equation*}

Note that
\begin{align*}
&\xi(\hat{P}_{-k})(O) - \xi(P)(O) \\
&= \frac{A \delta(X) \{Y - \mu_1(X)\} + (1 - A) \{Y - \mu_0(X)\}}{\delta(X) \pi(X) + 1 - \pi(X)} - \frac{A \delta(X) \{Y - \mu_1(X)\} + (1 - A) \{Y - \mu_0(X)\}}{\delta(X) \pi(X) + 1 - \pi(X)} \\
&\quad + \frac{\delta(X) \pi(X) \mu_1 (X) + \{1 - \pi(X)\} \mu_0(X)}{\delta(X) \pi(X) + 1 - \pi(X)} - \frac{\delta(X) \pi(X) \mu_1 (X) + \{1 - \pi(X)\} \mu_0(X)}{\delta(X) \pi(X) + 1 - \pi(X)} \\
&\quad + \frac{\delta(X) \tau(X) \{A - \pi(X)\}}{\{\delta(X) \pi(X) + 1 - \pi(X)\}^2} - \frac{\delta(X) \tau(X) \{A - \pi(X)\}}{\{\delta(X) \pi(X) + 1 - \pi(X)\}^2},
\end{align*}
and we assume that $|Y|$ and $|\delta (X)|$ are bounded in probability. By the triangle and Cauchy-Schwarz inequalities, we have that
\begin{align*}
\|\xi(\hat{P}_{-k}) - \xi(P)\|_{L_2} &\leq \hat{C}_{1,-k} \|\hat{\mu}_{0,-k}(X) - \mu_0(X)\|_{L_2} + \hat{C}_{2,-k} \|\hat{\mu}_{1,-k}(X) - \mu_1(X)\|_{L_2} \\
&\qquad + \hat{C}_{3,-k} \|\hat{\pi}_{-k}(X) - \pi(X)\|_{L_2}
\end{align*}
where $\hat{C}_{1,-k}$, $\hat{C}_{2,-k}$ and $\hat{C}_{3,-k}$ are $O_p (1)$. We assume that $\|\hat{\pi}(x) - \pi(x)\|_{L_2} = o_p(n^{-1/4})$, and $\|\hat{\mu}_{a} - \mu_{a}\|_{L_2} = o_p(n^{-1/4})$ for $a = 0, 1$. Therefore, we have that $\sqrt{n_k} R_{k,1} = o_p(1)$.

By the same arguments as Section~\ref{sm.sec:proof.thm.donsker}, we have that $\sqrt{n_k} R_{k,2} = o_p(1)$. That is, we conclude that
\begin{equation*}
\sqrt{n} \left\{\hat{\Psi}_{\rm CF} - \Psi (P)\right\} \to \mathcal{N}(0, E[\phi^2(P)(O)]),
\end{equation*}
which completes the proof.

\section{Proof of Theorem~\ref{thm:cvx.policy}}

In this section, we consider a parametric policy class $\mathcal{D}(H)$ indexed by $\eta \in H$. That is, the off-policy learning task is given by the following optimization problem:
\begin{align*}
\eta^{\ast} = \arg\max_{\eta \in H} & \quad V(\eta), \\
\text{ subject to} & \quad c(\eta) \leq 0,
\end{align*}
and the estimated policy is given by
\begin{align*}
\hat{\eta} = \arg\max_{\eta \in H} & \quad \hat{V}(\eta), \\
\text{ subject to} & \quad \hat{c}(\eta) \leq 0.
\end{align*}

We first review a useful lemma from \cite{shapiro1991asymptotic}.
\begin{lemma}\label{lm:appr.cvx.opt}
Let $H$ be a compact subset of $\mathbb{R}^{k}$. Let $C(H)$ denote the set of continuous real-valued functions on $H$, with $\mathcal{L} = C(H) \times \cdots \times C(H)$ the $r$-dimensional Cartesian product. Let $f(\eta) = (f_{0}, \ldots, f_{r}) \in \mathcal{L}$ be a vector of convex functions. Consider the quantity $\eta^\ast$ defined as the solution to the following convex optimization program:
\begin{align*}
\eta^\ast = & \arg\min_{\eta \in H} f_{0}(\eta), \\
\text{subject to } & f_{j}(\eta) \leq 0, j = 1, \ldots, r.
\end{align*}

Assume that Slater's condition holds, so that there is some $\eta \in H$ for which the inequalities are satisfied and non-affine inequalities are strictly satisfied, i.e. $f_{j}(\eta) < 0$ if $f_{j}(\eta)$ is non-affine. Now consider a sequence of approximating programs, for $n = 1, 2, \ldots$:
\begin{align*}
\hat{\eta}_{n} = & \arg\min_{\eta \in H} \hat{f}_{n,0}(\eta), \\
\text{subject to } & \hat{f}_{n,j}(\eta) \leq 0, j = 1, \ldots, r,
\end{align*}
with $\hat{f}_{n} (\eta) = \left(\hat{f}_{n,0},\ldots,\hat{f}_{n,r}\right) \in \mathcal{L}$. Assume that $r(n) \left(\hat{f}_{n} - f\right)$ converges in distribution to a random element $W \in \mathcal{L}$ for some real-valued function $f(\eta)$. Then
\begin{equation*}
r(n) \left(\hat{f}_{n,0}(\eta)(\hat{\eta}_{n}) - f_{0}(\eta^\ast)\right) \to L,
\end{equation*}
for a particular random variable $L$. It follows that $\hat{f}_{n,0}(\eta)(\hat{\eta}_{n}) - f_{0}(\eta^\ast) = O_{p}(1/r(n))$.
\end{lemma}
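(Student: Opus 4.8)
The plan is to analyze the optimal-value functional of the program and apply a functional delta method. Define $\vartheta : \mathcal{L} \to \mathbb{R}$ by $\vartheta(g) = \inf\{ g_0(\eta) : \eta \in H, \ g_j(\eta) \le 0, \ j = 1, \ldots, r \}$, so that $\vartheta(f) = f_0(\eta^\ast)$ and $\vartheta(\hat f_n) = \hat f_{n,0}(\hat\eta_n)$. The quantity $r(n)(\hat f_{n,0}(\hat\eta_n) - f_0(\eta^\ast))$ is then exactly $r(n)(\vartheta(\hat f_n) - \vartheta(f))$, and the whole statement reduces to understanding how $\vartheta$ responds to the small perturbation $\hat f_n - f$, which has size $O_p(1/r(n))$ by hypothesis.

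First I would record the consequences of the hypotheses that make $\vartheta$ well behaved. Compactness of $H$ together with continuity of the coordinates of $f$ ensures that the solution set $S(f) = \arg\min\{ f_0(\eta) : \eta \in H, \ f_j(\eta) \le 0\}$ is nonempty and compact. Slater's condition is the essential constraint qualification: it guarantees that the set $\Lambda$ of optimal Lagrange multipliers is nonempty, convex, and bounded, and that strong duality holds, $f_0(\eta^\ast) = \min_{\eta \in H} \max_{\lambda \in \Lambda} \{ f_0(\eta) + \sum_j \lambda_j f_j(\eta) \}$. Convexity of each $f_j$ is what lets me linearize the Lagrangian and obtain a clean first-order expansion.

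The core step is to establish that $\vartheta$ is Hadamard directionally differentiable at $f$, tangentially to $\mathcal{L}$, with derivative $\vartheta'(f; d) = \min_{\eta \in S(f)} \max_{\lambda \in \Lambda} [ d_0(\eta) + \sum_{j=1}^r \lambda_j d_j(\eta) ]$. I would prove this by a sandwich argument. For the upper bound, I perturb an optimizer $\eta^\ast \in S(f)$ toward a Slater point $\bar\eta$ along the segment $\eta_t = (1-t)\eta^\ast + t\bar\eta$; convexity gives the non-affine constraints a strictly negative gap of order $t$, so $\eta_t$ is feasible for the perturbed program once $t$ is of order $\|\hat f_n - f\|_\infty$, and evaluating $\hat f_{n,0}$ at $\eta_t$ bounds $\vartheta(\hat f_n)$ from above. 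For the lower bound, I use weak duality: for any fixed $\lambda \in \Lambda$, nonnegativity of $\lambda$ and feasibility $\hat f_{n,j}(\hat\eta_n) \le 0$ give $\hat f_{n,0}(\hat\eta_n) \ge \hat f_{n,0}(\hat\eta_n) + \sum_j \lambda_j \hat f_{n,j}(\hat\eta_n) \ge \inf_\eta \{ \hat f_{n,0}(\eta) + \sum_j \lambda_j \hat f_{n,j}(\eta) \}$, which I relate back to $\vartheta(f)$ through the dual representation. Matching the two bounds to leading order yields the derivative formula. The main obstacle is precisely this feasible-set sensitivity: I must show that perturbing the constraints by $O(\|\hat f_n - f\|_\infty)$ moves the optimal value by the same order, with no loss from near-active constraints. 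This is exactly where Slater's condition does the work, keeping the feasible set from degenerating and the multipliers bounded, so that the restoration step $\eta_t$ costs only $O(t) = O(\|\hat f_n - f\|_\infty)$.

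Finally I would invoke the delta method for Hadamard directionally differentiable functionals. Since $r(n)(\hat f_n - f) \Rightarrow W$ in $\mathcal{L}$ and $\vartheta$ is Hadamard directionally differentiable at $f$, the delta method yields $r(n)(\vartheta(\hat f_n) - \vartheta(f)) \Rightarrow \vartheta'(f; W) =: L$, which is the claimed convergence in distribution with $L = \min_{\eta \in S(f)} \max_{\lambda \in \Lambda} [ W_0(\eta) + \sum_{j} \lambda_j W_j(\eta) ]$. Because $L$ is a genuine, almost-surely finite random variable, convergence in distribution implies that $r(n)(\vartheta(\hat f_n) - \vartheta(f))$ is tight, and hence $\hat f_{n,0}(\hat\eta_n) - f_0(\eta^\ast) = O_p(1/r(n))$, completing the proof.
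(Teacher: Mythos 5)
Your proposal is correct and follows essentially the same route as the paper, which states this lemma without proof and defers to \citet{shapiro1991asymptotic}: Shapiro's argument is precisely yours, namely Hadamard directional differentiability of the optimal-value functional under Slater's condition, with derivative $\vartheta'(f;d)=\min_{\eta\in S(f)}\max_{\lambda\in\Lambda}\bigl[d_{0}(\eta)+\sum_{j=1}^{r}\lambda_{j}d_{j}(\eta)\bigr]$, combined with the extended delta method for directionally differentiable maps, and then tightness of the limit $L$ gives the $O_{p}(1/r(n))$ rate. No gaps worth flagging.
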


By Theorem~\ref{thm:donsker} or \ref{thm:cross.fit}, we have that 
\begin{equation*}
\sqrt{n} \left(\hat{V}(\eta) - V(\eta)\right) = \frac{1}{\sqrt{n}} \sum_{i=1}^{n} \phi_{V}(O_{i};\eta) + o_{p}(1),
\end{equation*}
and by condition (ii), we have that 
\begin{equation*}
\sqrt{n} \left(\hat{c}(\eta) - c(\eta)\right) = \frac{1}{\sqrt{n}} \sum_{i=1}^{n} \phi_{c}(O_{i};\eta) + o_{p}(1),
\end{equation*}
where $\phi_{V}$ and $\phi_{c}$ are the influence functions.

By condition (i) and Lemma~\ref{lm:appr.cvx.opt} with $r(n) = \sqrt{n}$, we obtain the conclusion (ii).

To prove conclusion (i), note that 
\begin{equation*}
V(\hat{\eta}) - V(\eta^\ast) = V(\hat{\eta}) - \hat{V}(\hat{\eta}) + \hat{V}(\hat{\eta}) - V(\eta^\ast),
\end{equation*}
where we have that $V(\hat{\eta}) - \hat{V}(\hat{\eta}) = O_p (n^{-1/2})$, and $\hat{V}(\hat{\eta}) - V(\eta^\ast) = O_p (n^{-1/2})$. Hence, we conclude that $V(\hat{\eta}) - V(\eta^\ast) = O_p (n^{-1/2})$, which completes the proof.

\section{Proof of Theorem~\ref{thm:gc.policy}}

In this section, we follow similar techniques in \cite{li2023trustworthy} and consider the off-policy learning task given by the following optimization problem:
\begin{align*}
d^\ast = \arg\max_{d \in \mathcal{D}} & \, V(d) = \arg\max_{d \in \mathcal{D}} E[\xi(P)(O)], \\
\text{ subject to} & \, c(d) = E[\phi_{c}(P)(O)] \leq 0,
\end{align*}
where $\mathcal{D}$ is a Glivenko–Cantelli class, and the estimated optimal policy is given by
\begin{align*}
\hat{d} = \arg\max_{d \in \mathcal{D}} & \, \hat{V}(d) = \arg\max_{d \in \mathcal{D}} \frac{1}{n} \sum_{i=1}^{n} \xi(\hat{P})(O_i)\\
\text{ subject to} & \, \hat{c}(d) = \frac{1}{n} \sum_{i=1}^{n} \phi_{c}(\hat{P})(O_i) \leq 0.
\end{align*}

By condition (iii) of Theorems~\ref{thm:donsker} or condition (ii) of Theorems~\ref{thm:cross.fit}, we have that both $\{\xi (O;d) : d \in \mathcal{D}\}$ and $\{\phi_{c} (O;d) : d \in \mathcal{D}\}$ are GC classes.

To simplify the notation, let we denote $\mathcal{D}_{c} = \left\{d \in \mathcal{D} : c(d) \leq 0\right\}$, and $\mathcal{D}_{n,c} = \left\{d \in \mathcal{D} : \hat{c}(d) \leq 0\right\}$. First we note that the estimation error can be expressed as 
\begin{equation*}
V(d^\ast) - \hat{V}(\hat{d}) = V_{n}^{(1)} + V_{n}^{(2)} + V_{n}^{(3)},
\end{equation*}
where we define
\begin{align*}
V_{n}^{(1)} &= \max_{d \in \mathcal{D}_{c}} E[\xi(P)(O)] - \max_{d \in \mathcal{D}_{c}} P_{n} \xi(P)(O), \\
V_{n}^{(2)} &= \max_{d \in \mathcal{D}_{c}} P_{n} \xi(P)(O) - \max_{d \in \mathcal{D}_{c}} P_{n} \xi(\hat{P})(O), \\
V_{n}^{(3)} &= \max_{d \in \mathcal{D}_{c}} P_{n} \xi(\hat{P})(O) - \max_{d \in \mathcal{D}_{n,c}} P_{n} \xi(\hat{P})(O).
\end{align*}

We analyze the three terms as follows. We have that 
\begin{align*}
V_{n}^{(1)} &= \max_{d \in \mathcal{D}_{c}} E[\xi(P)(O)] - \max_{d \in \mathcal{D}_{c}} P_{n} \xi(P)(O) \\
&\leq \max_{d \in \mathcal{D}_{c}} \left|E[\xi(P)(O)] - P_{n} \xi(P)(O)\right| \\
&= o_{p}(1),
\end{align*}
and similarly we have that
\begin{align*}
V_{n}^{(2)} &= \max_{d \in \mathcal{D}_{c}} P_{n} \xi(P)(O) - \max_{d \in \mathcal{D}_{c}} P_{n} \xi(\hat{P})(O) \\
&\leq \max_{d \in \mathcal{D}_{c}} \left|P_{n} \{\xi(P)(O) - \xi(\hat{P})(O)\}\right| \\
&= o_{p}(1).
\end{align*}

To analyze $V_{n}^{(3)}$, note that for any $d \in \mathcal{D}$, we have that
\begin{align*}
&E[\phi_{c}(P)(O)] - P_{n} \phi_{c}(\hat{P})(O) \\
&= \{E[\phi_{c}(P)(O)] - P_{n} \phi_{c}(P)(O)\} + \{P_{n} \phi_{c}(P)(O) - P_{n} \phi_{c}(\hat{P})(O)\},
\end{align*}
and $E[\phi_{c}(P)(O)] - P_{n} \phi_{c}(P)(O)$ converges to $0$ uniformly as $\{\phi_{c} (O;d) : d \in \mathcal{D}\}$ is a GC class, and $P_{n} \phi_{c}(P)(O) - P_{n} \phi_{c}(\hat{P})(O)$ converges to $0$ uniformly by condition (ii). 

Hence, $\forall \epsilon > 0$, $\exists N_{1} \in \mathbb{N}$, such that for all $n > N_{1}$, $|E[\phi_{c}(P)(O)] - P_{n} \phi_{c}(\hat{P})(O)| < \epsilon$, by which we obtain that, for all $d \in \mathcal{D}_{c}$, i.e., $E[\phi_{c}(P)(O)] \leq c$, we have that $P_{n} \phi_{c}(\hat{P})(O) < c + \epsilon$. Therefore, we have that $\frac{c}{c + \epsilon} d \in \mathcal{D}_{n,c}$. 

As $\xi(\hat{P})(O)$ is uniformly bounded, there exists a constant $L > 0$ such that for any $d_{1}, d_{2}$, we have that
\begin{equation*}
|\xi(\hat{P})(O;d_{1}) - \xi(\hat{P})(O;d_{2})| \leq L \sup_{x \in \mathcal{X}} |d_{1}(x) - d_{2}(x)|.
\end{equation*}

Thus, $\forall \epsilon > 0$, $\exists N_{1} \in \mathbb{N}$, such that for all $n > N_{1}$,
\begin{align*}
V_{n}^{(3)} &= \max_{d \in \mathcal{D}_{c}} P_{n} \xi(\hat{P})(O) - \max_{d \in \mathcal{D}_{n,c}} P_{n} \xi(\hat{P})(O) \\
&\leq \max_{d \in \mathcal{D}_{c}} P_{n} \xi(\hat{P})(O) - \max_{d \in \frac{c}{c + \epsilon}\mathcal{D}_{c}} P_{n} \xi(\hat{P})(O) \\
&\leq \frac{\epsilon}{c + \epsilon} L,
\end{align*}
and similarly, we can obtain that $\exists N_{2} \in \mathbb{N}$, such that for all $n > N_{2}$,
\begin{equation*}
V_{n}^{(3)} \geq -\frac{\epsilon}{c + \epsilon} L,
\end{equation*}
which in combination implies that $V_{n}^{(3)} = o_{p}(1)$.

Next, we prove our result (ii) for the regret. Note that
\begin{equation*}
V(d^\ast) - V(\hat{d}) = \{V(d^\ast) - \hat{V}(d^\ast)\} + \{\hat{V}(d^\ast) - \hat{V}(\hat{d})\} + \{\hat{V}(\hat{d}) - V(\hat{d})\}.
\end{equation*}

We analyze the three terms as follows. By the same argument for proving (i), we have that 
\begin{align*}
V(d^\ast) - \hat{V}(d^\ast) &= E[\xi(P)(O;d^\ast)] - P_{n} \xi(\hat{P})(O;d^\ast) = o_{p}(1), \\
\hat{V}(\hat{d}) - V(\hat{d}) &= P_{n} \xi(\hat{P})(O;\hat{d}) - E[\xi(P)(O;\hat{d})] = o_{p}(1).
\end{align*}

Also by a similar argument, we have that for any $d \in \mathcal{D}$ and $\epsilon > 0$, $\exists N_{2} \in \mathbb{N}$, for all $n > N_{2}$, $\frac{c}{c + \epsilon}d \in \mathcal{D}_{n,c}$, and 
\begin{align*}
\hat{V}(d^\ast) - \hat{V}(\hat{d}) &= \hat{V}(d^\ast) - \hat{V}\left(\frac{c}{c + \epsilon}d^\ast\right) + \hat{V}\left(\frac{c}{c + \epsilon}d^\ast\right) - \hat{V}(\hat{d}) \\
&\leq \frac{\epsilon}{c + \epsilon} L,
\end{align*}
and also that for any $d \in \mathcal{D}$ and $\epsilon > 0$, $\exists N_{3} \in \mathbb{N}$, for all $n > N_{3}$, $\frac{c}{c + \epsilon} \hat{d} \in \mathcal{D}_{n,c}$, and 
\begin{equation*}
V(d^\ast) - V(\hat{d}) \geq V\left(\frac{c}{c + \epsilon}\hat{d}\right) - V(\hat{d}) \geq -\frac{\epsilon}{c} L,
\end{equation*}
so we conclude that $V(d^\ast) - V(\hat{d}) = o_{p}(1)$, which completes the proof.

\section{Additional simulations}
\label{sm.sec:simu}

In this section, we present additional simulation results.

\subsection{Incremental propensity score policy learning with sufficent overlap}

We examine the performance of our proposed methods by comparison with standard policy learning methods, when sufficient overlap indeed holds. We consider the following data generating process:
\begin{align*}
    (X_1, X_2) & \sim \text{Uniform}(0, 1), \\
    (X_3, X_4) & \sim \mathcal{N}\left\{\left(\begin{smallmatrix} 0 \\ 0 \end{smallmatrix}\right), \left(\begin{smallmatrix} 1 & 0.3 \\ 0.3 & 1 \end{smallmatrix}\right)\right\}, \\
    A & \sim \text{Bernoulli}(\text{expit}(0.3 - 0.4 X_1 - 0.2 X_2 - 0.3 X_3 + 0.1 X_4)), \\
    Y(0) & \sim \mathcal{N}\{20 (1 + X_1 - X_2 + X_3^2 + \exp{(X_2)}), 20^2\}, \\
    Y(1) & \sim \mathcal{N}\{20 (1 + X_1 - X_2 + X_3^2 + \exp{(X_2)}) + 25 (3 - 5 X_1 + 2 X_2 - 3 X_3 + X_4), 20^2\}.
\end{align*}

We perform the vanilla direct policy search tasks without constraint. Hence, the optimal policy is simply $d^{\ast}(x) = I\{3 - 5 X_1 + 2 X_2 - 3 X_3 + X_4 > 0\}$. For standard methods, we consider the policy class of linear rules $\mathcal{D}_{\rm linear} = \{d(x) = I\{(1, x_1, x_2, x_3, x_4) \beta > 0\}: \beta \in \mathbb{R}^{5}, \|\beta\|_{2} = 1\}$. For the incremental propensity score policies, we consider the class $\mathcal{D}_{\rm IPS} = \{d(x) = \delta(x;\beta) \pi(x) / \{\delta(x;\beta) \pi(x) + 1 - \pi(x)\}: \beta \in \mathbb{R}^{5}\}$, which is indexed by $\delta(x;\beta) = \exp{\{(1, x_1, x_2, x_3, x_4) \beta\}}$.

We estimate the outcome regression model $\mu(x)$ and the propensity score $\pi(x)$ using the generalized random forests \citep{athey2019generalized} implemented in the \texttt{R} package \texttt{grf}. The unconstrained optimization problems are solved by the genetic algorithm \citep{sekhon1998genetic} implemented in the \texttt{R} package \texttt{rgenoud}. The sample size is $n = 2000$. We compare the true values of the estimated optimal policies using test data with sample size $N = 10^5$. The true optimal value is approximated using the test data. Simulation results of $100$ Monte Carlo repetition are reported in Figure~\ref{fig:suff-ol}. 

\begin{figure}[ht]
     \centering
     \begin{subfigure}[b]{0.475\textwidth}
         \centering
         \includegraphics[width=\textwidth]{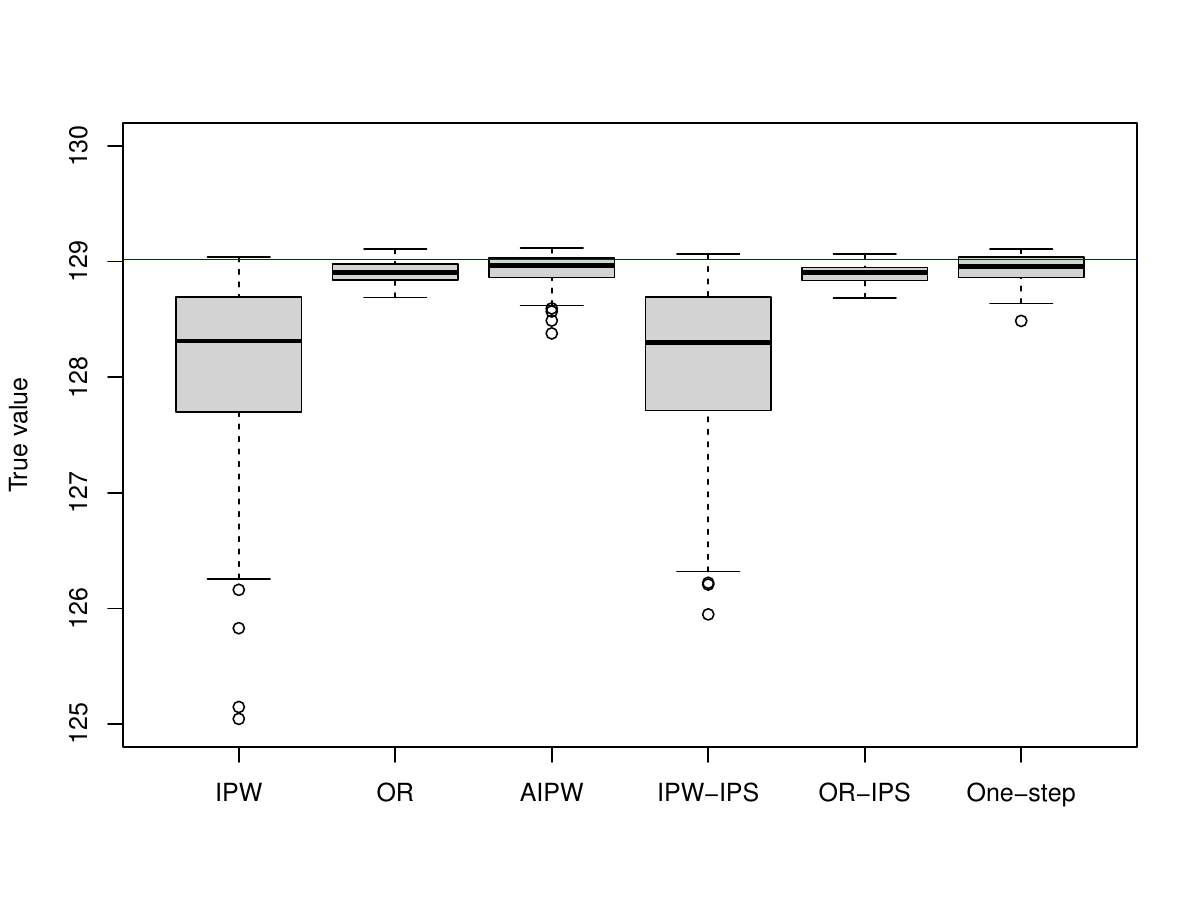}
         \caption{Sufficient overlap. }
         \label{fig:suff-ol}
     \end{subfigure}
     \hfill
     \begin{subfigure}[b]{0.475\textwidth}
         \centering
         \includegraphics[width=\textwidth]{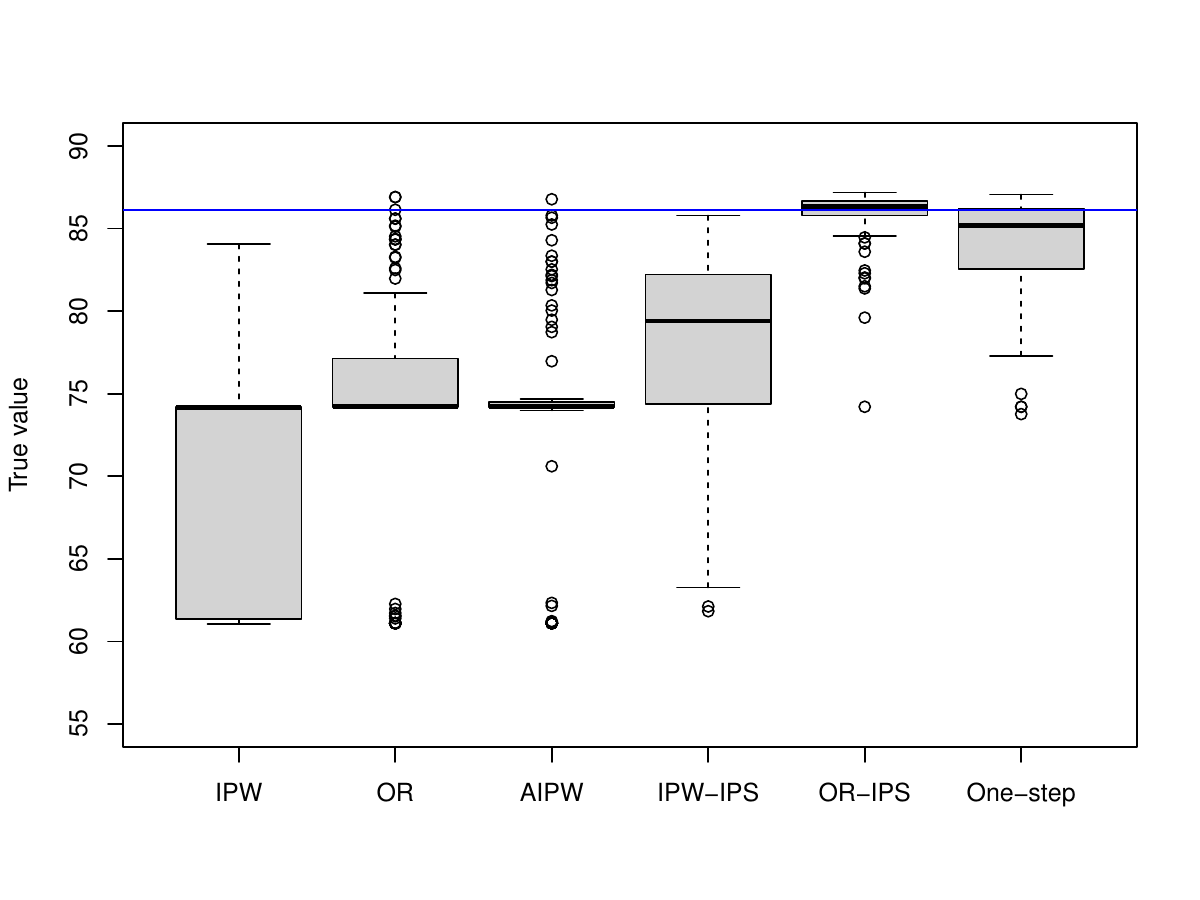}
         \caption{Parametric models. }
         \label{fig:fair-dp-para}
     \end{subfigure}
        \caption{Performance of optimal policies under three standard methods (IPW, OR, AIPW) and our proposed three methods (IPW-IPS, OR-IPS, One-step). The blue line is the (approximate) true optimal value.}
\end{figure}

Despite the fact that the true optimal rule is included in the standard policy class of linear rules but not in our proposed class of incremental propensity score policies, we still observe comparable performance of both classes, which exemplifies the effectiveness of our proposed methods.

\subsection{Incremental propensity score policy learning with parametric models}

We examine the performance of our proposed methods by comparison with standard policy learning methods, when using correctly specified parametric models.

The simulation setup is the same as in the main paper where the positivity assumption is violated, except that the sample size $n = 500$ is smaller and the outcome regression $\mu(s, x)$ and the propensity score $\pi(s, x)$ models are estimated by correctly specified parametric models. Simulation results of $100$ Monte Carlo repetition are reported in Figure~\ref{fig:fair-dp-para}. The standard methods IPW, OR, and AIPW have the worst performance. The IPW-IPS estimator still has large variability, and the OR-IPS and efficient one-step estimators achieve the best performance with the highest value.

\section{Diabetes data analysis}
\label{sm.sec:data}

In this section, we provide supplementary information on our Diabetes data analysis.

The original dataset is available in the UCI Repository \href{https://archive.ics.uci.edu/ml/datasets/Diabetes+130-US+hospitals+for+years+1999-2008}{Diabetes 130-US hospitals for years 1999-2008} \citep{strack2014impact}. The \texttt{Fairlearn} open source project \citep{weerts2023fairlearn} provides full dataset pre-processing script in \texttt{python} on \href{https://github.com/fairlearn/talks/blob/main/2021_scipy_tutorial/preprocess.py}{GitHub}. We follow these pre-processing steps, and provide the \texttt{R} script.

The dataset contains $101766$ patients, and a detailed description of the $25$ variables are available at the \href{https://fairlearn.org/v0.8/user_guide/datasets/diabetes_hospital_data.html}{\texttt{Fairlearn} project}. Originally, the categories of race include ``African American", ``Asian", ``Caucasian", ``Hispanic", ``Other", ``Unknown", and the categories of age include ``$30$ years or younger", ``$30-60$ years", ``Over $60$ years". We dichotomize them, so the resultant categories of race include ``Caucasian" or ``Non-Caucasian", and the resultant categories of age include ``$30$ years or younger" or ``Over $30$ years".

The missing data are completed by multivariate imputation by chained equations, implemented in the \texttt{R} package \texttt{mice}.

\end{document}